\newtheorem{theorem}{Theorem} 
\newtheorem{lemma}[theorem]{Lemma}
\newenvironment{lemma-repeat}[1]{\begin{trivlist}
\item[\hspace{\labelsep}{\bf\noindent Lemma \ref{#1} }]\em }%
{\end{trivlist}}
\newenvironment{theorem-repeat}[1]{\begin{trivlist}
\item[\hspace{\labelsep}{\bf\noindent Theorem \ref{#1} }]\em }%
{\end{trivlist}}
\newcommand{\remove}[1]{}
\newcommand{\size}[1]{\ensuremath{\left|#1\right|}}
\newcommand{\set}[1]{\left\{ #1 \right\}}
\DeclareMathOperator{\polylog}{polylog}
\newcommand{\bbR}{\mathop{\mathbb{R}}}
\newcommand{\wbar}{\bar{w}}
\newcommand{\xbar}{\bar{x}}
\newcommand{\lrmwm}{Algorithm~\texttt{MWM-seq}}
\newcommand{\mwmsemi}{Algorithm~\texttt{MWM-semi}}
\begin{document}

\title{A $(2+\epsilon)$-Approximation for Maximum Weight Matching\\
        in the Semi-Streaming Model%
\thanks{An extended abstract of this work was presented in SODA 2017~\cite{PazS17}.}}
\author{Ami Paz\footnotemark[3]
\and Gregory Schwartzman\footnotemark[4]}

\date{} 

\maketitle

\renewcommand*{\thefootnote}{\fnsymbol{footnote}}
\footnotetext[3]{%
IRIF, CNRS and University Paris Diderot, France
\texttt{amipaz@irif.fr}.
Supported by the Fondation Sciences Math\'ematiques de Paris (FSMP)
}
\footnotetext[4]{%
National Institute of Informatics, Japan 
\texttt{greg@nii.ac.jp}. 
Supported by JSPS KAKENHI Grant Number JP18H05291.
}
\renewcommand*{\thefootnote}{\roman{footnote}}

\begin{abstract}
We present a simple deterministic single-pass
$(2+\epsilon)$-approximation algorithm for the maximum weight matching problem
in the semi-streaming model.
This improves upon the currently best known approximation ratio of $(4+\epsilon)$.

Our algorithm uses $O(n\log^2 n)$ bits of space for constant values of $\epsilon$.
It relies on a variation of the local-ratio theorem,
which may be of use for other algorithms in the semi-streaming model as well.
\end{abstract}

\section{Introduction}
We present a simple $(2+\epsilon)$-approximation algorithm
for the maximum weight matching (MWM) problem
in the semi-streaming model.
Our algorithm is deterministic, single-pass,
requires only $O(1)$ processing time per incoming edge,
and
uses $O(n \log^2 n)$ bits of space for any constant $\epsilon>0$.
This improves upon the previously best known
approximation algorithm 
of Crouch and Stubbs~\cite{CrouchS14},
which achieves an approximation ratio of $(4+\epsilon)$
and takes $O(\log n)$ time to process an edge.
Our main result is as follows.
\begin{theorem}
	\label{thm: the main theorem}
	There exists an algorithm in the semi-streaming model computing a $(2+\epsilon)$-approximation for MWM, using $O(\epsilon^{-1}n\log n \cdot (\log n + \log (1/\epsilon)))$ bits and having an $O(1)$ processing time.
\end{theorem}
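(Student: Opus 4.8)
The plan is to implement the classical local-ratio $2$-approximation for maximum weight matching as a single-pass streaming algorithm, introducing a multiplicative slack of $\epsilon$ that simultaneously keeps the bookkeeping small and is absorbed into the approximation ratio through a variant of the local-ratio theorem. Concretely, I would maintain a potential $\phi(v)\ge 0$ for every vertex (initially $0$), a running maximum edge weight $W$, and a stack $S$. When an edge $e=(u,v)$ of weight $w(e)$ arrives, I first update $W$; if $w(e)$ lies below a threshold of roughly $\epsilon W/n$ I discard $e$, since any matching has at most $n/2$ edges and all such edges together are worth at most $\epsilon\cdot OPT$. Otherwise I compute the residual weight $w'(e)=w(e)-\phi(u)-\phi(v)$, and if $w(e)>(1+\epsilon)(\phi(u)+\phi(v))$ — equivalently $w'(e)>\epsilon(\phi(u)+\phi(v))$ — I push $e$ onto $S$ and increase both $\phi(u)$ and $\phi(v)$ by $w'(e)$, mimicking the subtraction of $w'(e)$ from every edge incident to $u$ or $v$. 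Once the stream ends, I unwind $S$: starting from the empty matching $M$, I pop edges in reverse order of insertion and add each popped edge to $M$ whenever both of its endpoints are still unmatched. Each step is a constant number of arithmetic operations and one stack operation, giving $O(1)$ processing time.

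For correctness I would state and prove an approximate local-ratio theorem: if $w=w_1+w_2$ and a feasible solution $M$ is simultaneously an $\alpha$-approximation with respect to $w_1$ and to $w_2$, then $M$ is an $\alpha$-approximation with respect to $w$, and additionally discarding an edge of positive residual weight $w'(e)\le\epsilon(\phi(u)+\phi(v))$ costs only a controlled amount. The inductive step views the processing of a pushed edge $e$ as the decomposition in which $w_1$ assigns the value $w'(e)$ to $e$ and to every edge sharing an endpoint with $e$, and $0$ elsewhere: a matching meets this neighborhood in at most two edges, so the $w_1$-optimum is at most $2w'(e)$, while the unwinding rule guarantees that $M$ contains $e$ or one of its neighbors, hence $w_1(M)\ge w'(e)$, so $M$ is a $2$-approximation for each such $w_1$. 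Chaining these decompositions along the stream, together with a careful accounting of the residual weight lost on the discarded and below-threshold edges, yields that $M$ is a $(2+O(\epsilon))$-approximation; rescaling $\epsilon$ gives the claimed bound.

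For the space bound I would argue that after thresholding all relevant weights lie in an interval $[\tau,W]$ with $W/\tau=O(n/\epsilon)$, and that the slack condition forces $\phi(v)$ to grow by a factor of at least $1+\epsilon$ each time an edge incident to $v$ is pushed (since $w'(e)>\epsilon\phi(v)$ at that moment). Hence each vertex is an endpoint of at most $O(\log_{1+\epsilon}(n/\epsilon))=O(\epsilon^{-1}(\log n+\log(1/\epsilon)))$ stacked edges, so $S$ holds $O(\epsilon^{-1}n(\log n+\log(1/\epsilon)))$ edges, each stored in $O(\log n)$ bits together with the potentials, for a total of $O(\epsilon^{-1}n\log n\,(\log n+\log(1/\epsilon)))$ bits. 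A subtlety is that $W$, and hence $\tau$, is revealed only online, so edges pushed early may retroactively fall below the threshold; I would address this by a lazy cleanup (e.g. also thresholding the residual weight before pushing) and by noting that such edges affect neither the correctness bound nor, via a more careful per-vertex accounting, the space bound by more than a constant factor.

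The step I expect to be the main obstacle is the approximation analysis: pinning down the correct ``approximate'' form of the local-ratio theorem and showing that the multiplicative slack in the push condition, compounded over a potentially long stream of decompositions, plus the discarded small edges, degrades the ratio by only a single $(1+\epsilon)$ factor rather than by something that accumulates. The crux is the unwinding invariant — that at the end every pushed edge is either in $M$ or adjacent to an edge of $M$ — together with a clean charging of the lost residual weight to the final potentials.
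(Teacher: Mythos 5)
Your high-level plan matches the paper's: a local-ratio reduction with a multiplicative slack $\alpha>1$ in the push/filter condition, potentials that grow geometrically at each stacked edge, and a final greedy unwinding. The factor-$\alpha$ filter does indeed give a $2\alpha$-approximation for each residual weight function, and this part does \emph{not} compound over iterations, so your worry on that front is misplaced. The genuine difficulty — and the part your proposal leaves unresolved — is the mechanism for \emph{removing} edges from the stack and accounting for that removal in the approximation ratio.

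Your global-threshold scheme ($\tau = \epsilon W/n$ with a ``lazy cleanup'' once $W$ grows) has two gaps. First, the per-vertex stack bound you derive assumes $\phi(v)$ starts at $\Omega(\tau_{\mathrm{final}})$ right after the first push at $v$, but since $W$ (hence $\tau$) is only revealed online, an adversary can feed $v$ a sequence of edges of weights $1,(1+\epsilon),(1+\epsilon)^2,\dots$, each passing the then-current threshold, so the number of pushes at $v$ is $\Theta(\log_{1+\epsilon} W)$, not $\Theta(\log_{1+\epsilon}(n/\epsilon))$; this only collapses to the claimed bound under an extra assumption on the weight range, and you acknowledge this but do not close it. Second, and more importantly, you never analyze what removing a stacked edge does to the approximation guarantee: once an edge is dropped, the unwinding invariant (``each pushed edge is in $M$ or adjacent to $M$'') fails for that edge, and Theorem~\ref{thm: lr original} no longer applies to that step. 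The paper handles exactly this with Lemma~\ref{lem: lr with omissions} and its iterated form Lemma~\ref{lem: approx-ratio}: one may drop a residual layer $\wbar_{i+1}$ only if the retained solution's value under $w_{i+1}$ already dominates any matching in $\wbar_{i+1}$ by a factor $\gamma$, and then each such drop costs a multiplicative $(1+1/\gamma)$, which is kept below $\alpha$ overall by choosing $\gamma = n^2/\ln\alpha$ and $\alpha=\sqrt{1+\epsilon/2}$.

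The paper also implements removal differently and more robustly than a global threshold: each vertex $v$ keeps a FIFO queue $E(v)$ of its stacked edges, and as soon as $|E(v)|$ exceeds the $O(\epsilon^{-1}(\log n+\log(1/\epsilon)))$ bound forced by the geometric growth of $\phi(v)$ (Lemma~\ref{lem: phi grows}), the oldest edge in $E(v)$ is evicted. This is exactly the moment at which Lemma~\ref{lem: edge weight inequality} certifies that the newest edge at $v$ has reduced weight $\geq 2\alpha\gamma$ times that of the evicted one, so Lemma~\ref{lem: lr with omissions} applies. This per-vertex rule is oblivious to the global weight scale, needs no threshold, and gives the space bound directly. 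To complete your proof you would need either to adopt this per-vertex eviction rule or to supply a precise version of the ``approximate local-ratio theorem with omissions'' together with an online-safe criterion for when an edge may be dropped — the two things your sketch identifies as the ``main obstacle'' but does not resolve.
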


The MWM problem is a classical problem in graph theory.
Its first efficient solution is due to Edmonds~\cite{Edmonds1965},
which was later improved by
Micali and Vazirani~\cite{MicaliV80}.
The MWM problem was one of the first to be considered
in the semi-streaming model
when this model was first presented~\cite{FeigenbaumKMSZ05},
and apparently the most studied problem in this model since
(see ``Related Work'').

In the first algorithms for the MWM problem in the semi-streaming model,
a matching is maintained at all times,
and is being updated according to the incoming edges.
More recent algorithms sort the edges
into weight classes,
keep a subset of each class,
and then find a matching in the union of these subsets.

Like previous algorithms,
our algorithm maintains a set of edges
from which the final matching is constructed;
however,
unlike some of the previous algorithms, 
we do not maintain a matching at all times,
but only construct it in a post-processing stage.
Our main technical contribution is the
adaptation of the local-ratio technique for
maximization problems~\cite{BarYehudaE1985, Bar-NoyBFNS01}
to the semi-streaming model,
in a novel and simple manner.
Our work presents
a significantly better approximation ratio
for the MWM problem,
along with a new approximation technique
for optimization problems in the semi-streaming model.

For the maximum \emph{unweighted} matching problem,
a simple greedy algorithm yields a $2$-approximation.
This was observed in the very first paper
on the semi-streaming model~\cite{FeigenbaumKMSZ05},
and not improved since.
Any future improvement of the approximation factor
to a constant smaller than $2$
will also solve this long-standing problem.

\paragraph{Our Contribution}
When developing an algorithmic framework for a new model, it is natural to first address the most fundamental algorithmic problems.
Finding a large matching in a graph is indeed a fundamental problem,
which has been extensively studied in the model of semi-streaming graph algorithms.
Our algorithm uses an extension of a well studied
approximation framework,
the local-ratio technique,
while previous algorithm used clever ideas which were
specifically crafted for the problem and model.

As noted, a simple greedy algorithm gives a 2-approximation
for MWM in the unweighted case.
In the weighted case,
a 2-approximation can be achieved by first sorting the edges
from the heaviest to the lightest,
and then adding them greedily to form a matching.
However, in the semi-streaming model it is impossible to keep a list of all the edges in the memory in order to sort them.
Instead, the local-ratio technique allows us to
ignore some of the edges,
and run a greedy algorithm on the remaining edges, in  an arbitrary order.
In this work,
we extend the local-ratio technique,
in a way that allows us to discard all but
$O(n\log n)$ of the edges,
complying with the memory restrictions of the model.

A simple local-ratio algorithm for the
MWM problem in the sequential model of computation
goes roughly as follows:
repeatedly select an edge with positive weight;
reduce its weight from the edge itself and
from all its neighboring edges;
push the edge into a stack
and continue to the next edge,
as long as there is an edge with positive weight;
finally,
unwind the stack and add the edges greedily
to the matching.
This procedure results in a 2-approximation for the MWM problem.
It can be extended to a $(2\alpha)$-approximation,
for $\alpha > 1$,
if at each step we reduce the weight of the processed edge
multiplied by $\alpha$ from its adjacent edges.

The challenge in translating this technique to the semi-streaming model is twofold.
First, we have to reduce edge weights from edges that are yet to arrive.
This is solved by saving,
for each node,
the total amount that should be reduced from each arriving
edge containing this node,
and reducing weight \emph{retroactively} from incoming stream edges.

The second, more substantial challenge,
is limiting the size of the stack,
so it can comply with the $O(n\polylog n)$
space bound.
It is not hard to come up with an execution of the above algorithm
where all edges are eventually stored
in the stack,
which may take $\Omega(n^2 \polylog n)$ bits of space.
To overcome this problem,
we \emph{remove edges} from within the stack,
\emph{during} the execution of the algorithm.
The traditional local-ratio technique was not designed to work under space limitations,
and thus does not guarantee any approximation ratio
if edges are removed from the stack.
The crux of our approach
is a variation of the local-ratio technique,
which provides conditions under which an edge may be removed from the stack
while incurring only a small loss
in the approximation ratio.

Specifically, we show that if an edge in the stack
is significantly lighter than its neighboring edge,
and this neighboring edge is added to the stack,
then removing the light edge has only a small effect on the total weight of the solution.
In order to use this conclusion,
we must first assure a steady increase
in the edge weights around each node.
This, in turn, requires another adaptation to the 
classical local-ratio approach for the problem.

To assure the constant growth of the edge weights,
we increase the weight an edge reduces from its neighborhood by a multiplicative factor.
This results in another deterioration in the approximation ratio, but has the benefit of forcing the weights of edges in the stack to exhibit a geometrical growth pattern.
This, in turn, creates the conditions for our modified local-ratio theorem to show its strength, allowing us to keep the size of the stack within the model's limits.
Carefully choosing parameters that manage the trade-off between space and approximation ratio,
we achieve a $(2+\epsilon)$-approximation using $O(n\log^2 n)$ bits.

Finally, we note that the basic structure of the local-ratio technique,
namely processing the edges one by one in an \emph{arbitrary} order and then performing some postprocessing,
suits very naturally to the streaming environment. Combined with the machinery we develop here in
order to follow the semi-streaming space constraints,
we believe this technique can be applied to additional problems in the semi-streaming model
and in similar computational models.

\paragraph{Related Work}
The study of graph algorithms in the semi-streaming model was initiated by Feigenbaum et al.~\cite{FeigenbaumKMSZ05},
in order to tackle the problem of processing massive graphs whose edge set cannot be stored in memory.
The need for algorithms for such massive graphs is evident,
as they become increasingly common:
graphs representing social networks, graphs for metabolic interactions used in computational biology
and even the communication graph of the Internet,
are only a few examples.

Feigenbaum et al.\ were also
the first to study the MWM problem in the semi-streaming model,
and presented a $6$-approximation algorithm for it.
Their algorithm maintains a matching at all times:
when an edge arrives, 
it checks if the new edge's weight
is more than double the sum of weights of its adjacent edges currently in the matching,
and if so,
the edge is added to the matching instead of its adjacent edges.
This idea was later adapted by McGregor~\cite{McGregor05} to achieve an approximation ratio of $5.828$,
by changing the threshold for inserting an edge to the matching
(McGregor also presents a $(2+\epsilon)$-approximation algorithm for the problem,
but using $O(\epsilon^{-3})$ passes on the input).
By using similar ideas,
while keeping deleted edges and reviving them later,
Zelke~\cite{Zelke12} achieves a $5.585$-approximation algorithm.

A different approach was taken
by Epstein et al.~\cite{EpsteinLMS11},
who achieve a $(4.911+\epsilon)$-approximation algorithm.
They use bucketing,
i.e., separate the edges into $O(\log n)$ weight classes,
find a matching in each bucket,
and then find the final matching in the union of these matching.
Crouch and Stubbs~\cite{CrouchS14} achieve an approximation
ratio of $(4+\epsilon)$
using related ideas,
but their algorithm uses weight classes
which are unbounded from above,
and thus are not disjoint.
Grigorescu et al.~\cite{GrigorescuMZ16}
have presented an improved analysis of the last algorithm,
claiming to achieve a $(3.5+\epsilon)$-approximation;
unfortunately, this analysis currently seems to contain an error.

The bucketing technique takes a heavy toll on the approximation factor,
and Crouch and Stubbs~\cite{CrouchS14}
prove this technique cannot give an approximation ratio
better than $3.5$.
To circumvent this bound,
we use a different
approximation framework,
the local-ratio technique.
To the best of our knowledge,
this is the first application of this technique in a streaming model.

Recently,
Ghaffari and Wajc~\cite{GhaffariW18} have shown a slight modification 
to our algorithm, 
that achieves the optimal $O(n\log n)$-bits bound,
assuming that the edge weights are integers of size polynomial in $n$.

One related problem is estimating size of the maximum matching in a graph~\cite{AssadiKL17,Kapralov13,KapralovKS14,GoelKK12}
which is known to be related to matrix rank approximation.
More general submodular-function matching problems
in the semi-streaming model have been considered by
Varadaraja\cite{Varadaraja11} and by Chakrabarti and Kale~\cite{ChakrabartiK14}.

The MWM problem was also considered in other streaming models,
such as the MapReduce model \cite{CrouchS14, LattanziMSV11},
the sliding-window model \cite{CrouchS14, CrouchMS13} and the turnstile stream model (allowing deletions as well as insertions)~\cite{Konrad15, AssadiKLY16, BuryS15, ChitnisCEHMMV16}.
Extending our technique to other computational models
is a challenge yet to be addressed.

\paragraph{Structure of this Paper}
We formally define the MWM problem and the semi-streaming model of computation 
in Section~\ref{sec: preliminaries}.
In Section~\ref{sec: approximaton} we introduce the local-ratio theorem,
present a sequential 2-approximation local-ratio algorithm for MWM,
and discuss our variations to the theorem.
In Section~\ref{sec: semi streaming algo}
we extend the 2-approximation algorithm to a more involved
$(2+\epsilon)$-approximation algorithm for MWM,
analyze its performance,
and finally adapt it to the semi-streaming model.

\section{Preliminaries}
\label{sec: preliminaries}

Let $G=(V,E,w)$ be a simple graph with non-negative edge weights,
$w\in \bbR_{+}^E$
(we use vector notation for edge weights).
Denote $n=\size{V}$ and $m=\size{E}$;
for an edge $e$ denote
$N(e) = \set{e'\mid \size{e\cap e'}=1}$
called the \emph{neighboring edges} of $e$,
and $N^+(e) = N(e)\cup \set{e}$.
We usually assume edge weights 
and their sums can be represented by $O(\log n)$ bits,
and discuss other weight functions at the end of the paper.

\paragraph{Maximum Weight Matching} 
A \emph{matching} in $G$ is a set $M\subseteq E$ of edges
such that no two edges share a node. 
A \emph{maximum weight matching} (MWM) in $G$
is a matching $M$ of maximum weight:
for every matching $M'$ in $G$,
we have $\sum_{e\in M} w[e] \geq \sum_{e\in M'} w[e]$.

A matching $M$ is identified with its indicator vector $x$,
defined by $x[e]=1$ if $e\in M$,
and $x[e]=0$ otherwise.
Thus, the weight of a matching $x$ is the value of the
inner product of $x$ and $w$, denoted $xw$.
A set of feasibility constrains on $x$ is induced
by the graph
in a straightforward manner:
	$\forall e,e'\in E: \size{e\cap e'}=1\implies x[e]\cdot x[e']=0$.

\paragraph{Approximation Algorithms}
A feasible matching $x$ is said to be
a \emph{$p$-approximation} of a MWM in $G$,
for a constant $p\geq 1$,
if every matching $x^\ast$ satisfies
$x^\ast w \leq p\cdot xw$.
An algorithm returning a $p$-approximation
on every input graph
is said to be a \emph{$p$-approximation algorithm} for
the MWM problem,
and $p$ is called the \emph{approximation ratio}
of the algorithm.
Note that if $p'>p$ than a $p$-approximation algorithm
is also a $p'$-approximation algorithm.
The definition naturally extends to other optimization problems.

\paragraph{The Semi-Streaming Model}
In the semi-streaming model of computation, as in sequential models,
the goal is to compute parameters in some given graph. An algorithm in this model proceeds in iterations,
where in each iteration it receives an edge from the stream
and processes it.
Since the number of edges in the graph
might be too large to fit in memory,
we limit the algorithm to use only $O(n \polylog n)$ bits.
In addition,
we try to keep the processing times of the edges
as short as possible,
since a long processing time might result in 
a queue of later incoming edges, exceeding the space limitations.
The algorithm is also allowed to perform pre-processing and post-processing, 
but minimizing their times is of less importance.

\section{Approximating Maximum Weight Matching}
\label{sec: approximaton}

In this section we present the \emph{local-ratio} theorem for maximization problems~\cite{Bar-YehudaBFR04},
and use it to present a sequential 2-approximation algorithm for MWM.
We then present extensions of this technique
and use them
in order to adjust the sequential local-ratio algorithm
to the semi-streaming model,
incurring only a small loss in the approximation ratio.

\subsection{A Simple Local-Ratio Approximation Algorithm for MWM}
The basic building blocks of a local-ratio algorithm are
iterative \emph{weight reduction} steps.
Weight reduction step number $i$ starts with a graph 
$G=(V,E,w_i)$,
and defines two new graphs,
composed of $(V,E)$ and new edge-weight functions:
the \emph{reduced graph}, which has weight function $w_{i+1}$,
and the \emph{residual graph}, which has weight function $\wbar_{i+1}$,
such that $w_i=w_{i+1}+\wbar_{i+1}$.
We start with the local-ratio theorem
for maximization problems~\cite[Theorem 9]{Bar-YehudaBFR04},
which we restate here for completeness.
Note that this theorem applies
even if $w_{i+1}$ takes negative values.

\begin{theorem}
\label{thm: lr original}
Let $w_i \in \bbR^m$ be a vector,
and consider the problem of maximizing the product $xw_i$
under a set of feasibility constraints.
Let $w_{i+1},\wbar_{i+1} \in \bbR^m$ be vectors such that $w_i=w_{i+1}+\wbar_{i+1}$.
If $x_i\in \bbR^m$ is a feasible solution
that is a $p$-approximation with respect to $w_{i+1}$
and with respect to $\wbar_{i+1}$,
then $x_i$ is a $p$-approximation with respect to $w_i$ as well.
\end{theorem}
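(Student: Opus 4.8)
The statement to prove is Theorem~\ref{thm: lr original}, the local-ratio theorem for maximization. The plan is to argue directly from the definitions of approximation and the additive decomposition $w_i = w_{i+1} + \wbar_{i+1}$. First I would fix an optimal (maximum weight) feasible solution $x_i^\ast$ with respect to $w_i$; the goal is to show $x_i^\ast w_i \le p \cdot x_i w_i$. Since $x_i^\ast$ is in particular a feasible solution, and $x_i$ is a $p$-approximation with respect to the weight vector $w_{i+1}$, we have $x_i^\ast w_{i+1} \le p \cdot x_i w_{i+1}$; this holds because a $p$-approximation with respect to $w_{i+1}$ upper-bounds the weight of \emph{every} feasible solution, in particular $x_i^\ast$, by $p$ times its own weight. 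Identically, feasibility of $x_i^\ast$ together with the fact that $x_i$ is a $p$-approximation with respect to $\wbar_{i+1}$ gives $x_i^\ast \wbar_{i+1} \le p \cdot x_i \wbar_{i+1}$.

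Then I would simply add the two inequalities and use linearity of the inner product:
\[
x_i^\ast w_i = x_i^\ast(w_{i+1} + \wbar_{i+1}) = x_i^\ast w_{i+1} + x_i^\ast \wbar_{i+1} \le p\cdot x_i w_{i+1} + p \cdot x_i \wbar_{i+1} = p\cdot x_i w_i .
\]
Since $x_i^\ast$ was an optimum with respect to $w_i$, every feasible solution $x$ satisfies $x w_i \le x_i^\ast w_i \le p \cdot x_i w_i$, which is exactly the assertion that $x_i$ is a $p$-approximation with respect to $w_i$. (If one does not wish to invoke the existence of an optimum — e.g.\ to be safe when $w_{i+1}$ has negative entries and the feasible region is unbounded in a bad way — the same chain of inequalities can be run verbatim with an arbitrary feasible $x$ in place of $x_i^\ast$, since the only property used is feasibility.)

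There is essentially no obstacle here: the proof is a one-line consequence of bilinearity and the hypothesis being quantified over all feasible solutions. The only point demanding a moment's care is making sure the argument does not secretly use non-negativity of the weights anywhere — it does not, since we only ever add two valid inequalities and never multiply or compare componentwise — which is precisely why the theorem remains valid when $w_{i+1}$ takes negative values, as the remark preceding the statement emphasizes. I would keep the write-up to the displayed computation above plus the two sentence justifications for the component inequalities.
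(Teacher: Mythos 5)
Your proof is correct and follows essentially the same approach as the paper: decompose $x_i^\ast w_i$ via $w_i = w_{i+1}+\wbar_{i+1}$, apply the $p$-approximation hypothesis to each summand, and add. The only cosmetic difference is that the paper routes through the optima $x_{i+1}^\ast$ and $\xbar_{i+1}^\ast$ as intermediate bounds, whereas you invoke the ``quantified over all feasible solutions'' definition directly on $x_i^\ast$ (and correctly observe this even obviates the need for an optimum to exist).
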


\begin{proof}
Let $x_i^\ast, x_{i+1}^\ast$ and $\xbar_{i+1}^\ast$
be maximum feasible solutions with respect to
$w_i, w_{i+1}$ and $\wbar_{i+1}$. Then
\begin{align*}
    x_i^\ast w_i
       	&=      x_i^\ast w_{i+1} + x_i^\ast \wbar_{i+1}\\
    	&\leq   x^\ast_{i+1} w_{i+1} + \xbar^\ast_{i+1} \wbar_{i+1}\\
    	&\leq   p \cdot x_i w_{i+1} + p \cdot x_i \wbar_{i+1}\\
    	&=     p \cdot x_i w_i,
\end{align*}
where the first inequality follows from the maximality of $x^\ast_{i+1}$ and $\xbar^\ast_{i+1}$,
and the second from the assumption that $x_i$ is a $p$-approximation
with respect to $w_{i+1}$ and $\wbar_{i+1}$.
\end{proof}

We apply weight reduction steps iteratively,
while ensuring that any $p$-approximate solution to $w_{i+1}$
can be easily extended into a $p$-approximate solution
to $\wbar_{i+1}$.

For the specific problem of MWM,
a weight reduction step is done by picking an arbitrary
edge $e\in E$ of positive weight and
reducing this weight from every $e' \in N^+(e)$.
This splits the weight vector $w_i$ into two vectors,
$w_{i+1}$ and $\wbar_{i+1}$, 
by setting
\begin{align*}
\wbar_{i+1}[e'] =
\begin{cases}
w_i[e] & e'\in N^+(e); \\
0 &	 	\text{otherwise,}
\end{cases}
\end{align*}
and $w_{i+1} = w_i - \wbar_{i+1}$.
Any $2$-approximate solution $x_{i+1}$ for the reduced graph
can be easily extended into a $2$-approximate solution 
for the residual graph
by making sure that at least one edge
$e' \in N^+(e)$ is in the solution:
if this is not the case,
we can add $e$ to the solution
without violating the constraints.
As $w_{i+1}[e]=0$, adding $e$ to the solution
does not reduce the solution's value with respect to $w_{i+1}$.
Thus, we get a 2-approximate solution
for both $w_{i+1}$ and $\wbar_{i+1}$.

This simple technique is realized by Algorithm~\ref{alg: MWM-seq}.
First, it applies weight reduction steps iteratively
using edges of positive reduced weight,
splitting a weight function $w_i$ into
$w_{i+1}$ (reduced) and $\wbar_{i+1}$ (residual)
and keeping the edge in a stack.
When no edge with a positive reduced weight remains,
the algorithm unwinds the stack
and adds the edges greedily to the matching.
While unwinding the stack, it maintains a set of interim solutions $\set{x_i}$;
the local-ratio theorem guarantees 
that every $x_i$ is a 2-approximate solution for $w_i$. 
Finally, the algorithm returns $x_1$, 
which is a 2-approximate solution for the original problem.

We note that this algorithm does not work in the semi-streaming model, as the stack can easily grow to contain $\Omega(n^2)$ edges.

\RestyleAlgo{boxruled}
\LinesNumbered
\begin{algorithm}[t]
	\DontPrintSemicolon
	\caption{\texttt{MWM-simple}($V,E,w$). A simple 2-approximation algorithm for MWM}
	\label{alg: MWM-seq}
	$S \gets $ empty stack \\
	$w_1 \gets w$;
	$i\gets 1$\\
	\ForEach{$e_i\in E$ s.t. $w_i[e_i] \geq 0$}
	{
		$S.\text{push}(e_i)$ \\
		$w_{i+1} \gets w_i$ \\
		\lForEach(\tcp*[f]%
		{Implicit: $\wbar_{i+1}[e'] \gets w_i[e_i]$})%
		{$e' \in N^+(e_i)$}
		{$w_{i+1}[e'] \gets w_i[e']- w_i[e_i]$}
		$i \gets i+1$\\
	}
	
	$k\gets \size{S}$ \\
	$x_{k+1} \gets  \vec{0}$\\
	\For{$i\gets k$ down to $1$}
	{
		$x_{i} \gets x_{i+1}$ \\
		$e_i \gets  S[i]$\\
		\lIf{$\forall e\in N^+(e_i):$ $x_{i}[e] = 0$}
		{
			$x_{i}[e_i] \gets 1$	
		}
	}
	return $x_1$
\end{algorithm}

\subsection{Extending the Local-Ratio Technique}
We now extend the approximation techniques used
in Algorithm~\ref{alg: MWM-seq}.
This allows us to present another sequential approximation algorithm
for MWM
in the following section,
with a worse approximation ratio of $2+\epsilon$.
However,
from the new algorithm we derive the desired
approximation algorithm
for the semi-streaming model,
with no further increase in the approximation ratio.

If instead of reducing exactly $w_i[e]$
from the neighboring edges of $e$,
we reduce either $w_i[e]$ or $\alpha w_i[e]$ from each such edge,
for some $\alpha\geq 1$,
we get a $(2\alpha)$-approximation,
as formalized by the next lemma.

\begin{lemma}
\label{lem: residual is an approximate sol}
Let $w_i, w_{i+1}$ and $\wbar_{i+1}$ be weight functions
and $e\in E$ an edge such that
\begin{align}
\wbar_{i+1}[e'] =
\begin{cases}
w_i[e] & e'=e; \\
\alpha w[e] \text{ or } w[e] & e'\in N(e); \\
0 & \text{otherwise,}
\end{cases}
\end{align}
and $w_{i+1}=w-\wbar_{i+1}$;
the choice between $w_i[e]$ and $\alpha w_i[e]$
can be arbitrary.

Let $x_i\in \{0,1\}^m$ be a matching.
If $x_i[e'] \neq 0$ for some $e'\in N^+(e)$,
then $x_i$ is a $(2\alpha)$-approximate solution for $\wbar_{i+1}$.
\end{lemma}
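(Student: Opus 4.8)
The plan is to prove directly that $x^\ast \wbar_{i+1} \leq 2\alpha \cdot x_i \wbar_{i+1}$ for every matching $x^\ast$, by bounding the two sides separately in terms of the single scalar $w_i[e]$. First I would record the shape of $\wbar_{i+1}$: its support is contained in $N^+(e)$, and since $\alpha \geq 1$ (and $w_i[e]\geq 0$, which is the situation in which the lemma is applied, as one always reduces along an edge of non-negative reduced weight), every nonzero entry of $\wbar_{i+1}$ lies in the interval $[\,w_i[e],\ \alpha w_i[e]\,]$. In particular $\wbar_{i+1}[e'] \leq \alpha\, w_i[e]$ for all $e'$, no matter how the arbitrary per-edge choice between $w_i[e]$ and $\alpha w_i[e]$ was resolved.

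The key combinatorial step is that any matching contains at most two edges of $N^+(e)$. Writing $e=\{u,v\}$, an edge of $N^+(e)$ is either $e$ itself or an edge incident to exactly one of $u,v$; a matching uses at most one edge at $u$ and at most one edge at $v$, and if it uses $e$ it uses no other edge of $N^+(e)$. Combining this with the uniform entrywise bound, for any matching $x^\ast$ we obtain
\[
x^\ast \wbar_{i+1} \;=\; \sum_{e'\in N^+(e)} x^\ast[e']\,\wbar_{i+1}[e'] \;\leq\; 2\alpha\, w_i[e].
\]

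For the other side, the hypothesis supplies an edge $e'\in N^+(e)$ with $x_i[e']\neq 0$, i.e.\ $x_i[e']=1$, and $\wbar_{i+1}[e'] \geq w_i[e]$ by the first paragraph; discarding the remaining non-negative terms gives $x_i \wbar_{i+1} \geq w_i[e]$. Chaining the two estimates yields $x^\ast \wbar_{i+1} \leq 2\alpha\, w_i[e] \leq 2\alpha\cdot x_i\wbar_{i+1}$, which is exactly the assertion that $x_i$ is a $(2\alpha)$-approximation with respect to $\wbar_{i+1}$. I do not expect a genuine obstacle here: the only points that need a moment's care are the uniform bound $\wbar_{i+1}[e']\leq \alpha w_i[e]$ despite the arbitrary choice, and the ``at most two neighbors of $e$ in a matching'' count; everything else is a one-line comparison.
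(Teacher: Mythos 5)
Your proof is correct and follows essentially the same route as the paper's: bound $x^\ast\wbar_{i+1}$ above by $2\alpha\, w_i[e]$ using the observation that a matching meets $N^+(e)$ in at most two edges, bound $x_i\wbar_{i+1}$ below by $w_i[e]$ using the hypothesis, and chain the two. The extra care you take in spelling out the ``at most two edges'' count and the sign assumption $w_i[e]\geq 0$ is reasonable but the argument is the same one the paper gives.
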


\begin{proof}
Let $x_i^\ast$ be any matching.
The definition of $\wbar_{i+1}$ guarantees that $x_i^\ast$ contains
at most two edges of non-zero weight in $\wbar_{i+1}$,
each of weight at most $\alpha w_i[e]$,
so $x^\ast \wbar_{i+1} \leq 2\alpha w_i[e]$.
On the other hand,
$x_i[e'] \neq 0$ for some $e'\in N^+(e)$,
so $w_i[e] \leq x_i \wbar_{i+1}$.
Using the last two inequalities, we get 
$x^\ast \wbar_{i+1} \leq 2\alpha\cdot x_i \wbar_{i+1}$,
as desired.
\end{proof}

Next,
we note that if the optimal solution for the reduced graph
is greater than the optimal solution for the residual graph
by some multiplicative factor $p \geq 1$,
then it is also a $(1+1 / p)$-approximation for the original graph.
For large values of $p$,
an approximate solution for the reduced graph
gives roughly the same approximation ratio for the original graph,
which allows us to ignore the residual graph.
We formalize this in the next lemma.

\begin{lemma}
\label{lem: lr with omissions}
Let $w_i, w_{i+1}$ and $\wbar_{i+1}$ be weight functions satisfying
$w_i= w_{i+1} + \wbar_{i+1}$ and
$w_{i+1}[e] \leq w_i[e]$ for all $e\in E$.
Let $x_{i+1}$ be a $\beta$-approximate solution for $w_{i+1}$.

If $x_{i+1} w_{i+1}$  is at least $p$ times larger than
any matching in $\wbar_{i+1}$,
then $x_{i+1}$ is a $(\beta+1/p)$-approximate solution
for $w_i$.
\end{lemma}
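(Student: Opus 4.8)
The plan is to fix an \emph{arbitrary} matching $x^\ast$ and show $x^\ast w_i \leq (\beta + 1/p)\cdot x_{i+1} w_i$; by the definition of a $p$-approximate solution, establishing this for every matching $x^\ast$ is exactly the claim. The natural starting point is the additive decomposition $w_i = w_{i+1} + \wbar_{i+1}$, which gives
\[
x^\ast w_i = x^\ast w_{i+1} + x^\ast \wbar_{i+1}.
\]
So it suffices to bound each of the two summands by a multiple of $x_{i+1} w_{i+1}$ and then translate the bound back to $w_i$. This mirrors the structure of the proof of Theorem~\ref{thm: lr original}, except that here we deliberately discard the residual graph and pay $1/p$ for it instead of requiring a genuine $p$-approximation with respect to $\wbar_{i+1}$.

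First I would handle $x^\ast w_{i+1}$: since $x_{i+1}$ is a $\beta$-approximate solution for $w_{i+1}$ and $x^\ast$ is a matching, the definition of approximation immediately yields $x^\ast w_{i+1} \leq \beta\cdot x_{i+1} w_{i+1}$. Next I would handle $x^\ast \wbar_{i+1}$ using the hypothesis that $x_{i+1} w_{i+1}$ is at least $p$ times the weight of any matching under $\wbar_{i+1}$; applying this to the matching $x^\ast$ gives $x^\ast \wbar_{i+1} \leq (1/p)\cdot x_{i+1} w_{i+1}$. Summing the two bounds, $x^\ast w_i \leq (\beta + 1/p)\cdot x_{i+1} w_{i+1}$.

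The final step is to replace $x_{i+1} w_{i+1}$ by $x_{i+1} w_i$ on the right-hand side, and this is the only place the monotonicity hypothesis $w_{i+1}[e] \leq w_i[e]$ for all $e$ is used: because $x_{i+1}$ is a $\{0,1\}$-vector (hence non-negative), coordinate-wise monotonicity of the weights gives $x_{i+1} w_{i+1} \leq x_{i+1} w_i$, and since $\beta + 1/p \geq 0$ the factor preserves the inequality. Chaining everything yields $x^\ast w_i \leq (\beta + 1/p)\cdot x_{i+1} w_i$, completing the argument. I do not expect a real obstacle here; the points worth a moment's care are (i) that the bound on $x^\ast\wbar_{i+1}$ must be invoked against $x^\ast$ as a feasible matching, not against the optimum of $\wbar_{i+1}$ directly, and (ii) that one should lean on $w_{i+1}\le w_i$ rather than on any sign assumption for $\wbar_{i+1}$, since the local-ratio framework explicitly allows $w_{i+1}$ (and hence $\wbar_{i+1}$) to take negative values; a quick sanity check is that the empty matching forces $x_{i+1} w_{i+1}\ge 0$, so none of the combined inequalities degenerates.
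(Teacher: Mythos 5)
Your proof is correct and follows essentially the same route as the paper's: decompose $x^\ast w_i = x^\ast w_{i+1} + x^\ast\wbar_{i+1}$, bound the first term by $\beta\,x_{i+1}w_{i+1}$ and the second by $(1/p)\,x_{i+1}w_{i+1}$, then pass from $x_{i+1}w_{i+1}$ to $x_{i+1}w_i$ via the coordinate-wise inequality $w_{i+1}\le w_i$. The only cosmetic difference is that you bound $x^\ast w_{i+1}$ and $x^\ast\wbar_{i+1}$ directly from the hypotheses rather than routing through the named optima $x_{i+1}^\ast$ and $\xbar_{i+1}^\ast$ as the paper does, and you make explicit the role of $x_{i+1}\ge\vec 0$ in the last step; both are fine.
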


\begin{proof}
Let $x_i^\ast, x_{i+1}^\ast$ and $\xbar_{i+1}^\ast$
be matchings of maximum weights in
$w_i, w_{i+1}$ and $\wbar_{i+1}$ respectively.

The assumptions imply
$x_{i+1}^\ast w_{i+1} \leq \beta x_{i+1} w_{i+1}$ and
$p \xbar_{i+1}^\ast \wbar_{i+1} \leq x_{i+1} w_{i+1}$,
so
\begin{align*}
	x_i^\ast w_i
	&=     x_i^\ast w_{i+1} + x_i^\ast \wbar_{i+1}\\
	&\leq  x^\ast_{i+1} w_{i+1} + \xbar^\ast_{i+1} \wbar_{i+1}\\
	&\leq  \beta x_{i+1}  w_{i+1} + (1/p) x_{i+1} w_{i+1}\\
	&=     (\beta+1/p) x_{i+1} w_{i+1}\\
	&\leq  (\beta+1/p) x_{i+1} w_i,
\end{align*}
where the last inequality follows from the fact that
$w_{i+1}[e] \leq w_i[e]$ for all $e\in E$.
\end{proof}

Let $w_1$ be a weight vector for the MWM problem,
and consider an iterative splitting of $w_i$ into
$w_{i+1}$ and $\wbar_{i+1}$ for $k$ times.
The last lemma allowed us to ignore the residual graph once;
we now extend it to allow the iterative omission of the residual graph.

Denote $\alpha = \sqrt{1+\epsilon/2}$,
$\gamma = n^2 / \ln (\alpha)$,
and $\beta_i = 2\alpha(1+1/\gamma)^{k+1-i}$ for all $i$.

\begin{lemma}
\label{lem: approx-ratio}
Let $G=(V,E,w_1)$ a graph,
and $w_2,\ldots,w_{k+1}$ and $\wbar_2,\ldots \wbar_{k+1}$
sequences of reduced and residual weight functions
for $(V,E)$, respectively.

Assume that we generate a sequence of solutions
$x_{k+1},\ldots, x_1$,
such that $x_{k+1}$ is an optimal solution for $w_{k+1}$,
and that for $1\leq i \leq k$,
if $x_{i+1}$ is a $\beta_{i+1}$-approximate solution
for $w_{i+1}$
then $x_i$ has the following properties:
\begin{enumerate}
  \item
  $x_i$ is a $\beta_{i+1}$-approximate solution for $w_{i+1}$.
  \item
  At least one of the following holds:
  \begin{enumerate}
    \item $x_i$ is a $\beta_{i+1}$-approximate solution for $\wbar_{i+1}$; or
    \item $x_{i} w_{i+1} \geq (\gamma / \beta_{i+1}) x^\ast\wbar_{i+1} $
        for every solution $x^\ast$.
  \end{enumerate}
\end{enumerate}
Then
$x_1$ is a $\beta_1$-approximate solution for $w_1$.
\end{lemma}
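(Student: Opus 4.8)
The plan is a downward induction on $i$, establishing the single invariant that \emph{$x_i$ is a $\beta_i$-approximate solution for $w_i$} for every $i$ from $k+1$ down to $1$; the case $i=1$ is exactly the assertion of the lemma. For the base case $i=k+1$, observe that $x_{k+1}$ is an optimal — hence $1$-approximate — solution for $w_{k+1}$, while $\beta_{k+1}=2\alpha(1+1/\gamma)^{0}=2\alpha\ge 1$ since $\alpha=\sqrt{1+\epsilon/2}\ge 1$; so the invariant holds trivially at $i=k+1$.

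For the inductive step I would fix $1\le i\le k$ and assume the invariant at $i+1$, i.e.\ that $x_{i+1}$ is $\beta_{i+1}$-approximate for $w_{i+1}$. This is \emph{precisely} the hypothesis that the lemma statement uses to activate properties~1 and~2 of $x_i$, so those become available. I would first record the arithmetic identity $\beta_i=\beta_{i+1}(1+1/\gamma)$, which is immediate from the definition $\beta_j=2\alpha(1+1/\gamma)^{k+1-j}$, and in particular $\beta_{i+1}\le\beta_i$. Then I split on which alternative of property~2 holds. If~2(a) holds, then $x_i$ is $\beta_{i+1}$-approximate with respect to $w_{i+1}$ (property~1) and with respect to $\wbar_{i+1}$ (property~2a); applying Theorem~\ref{thm: lr original} with $p=\beta_{i+1}$ to the decomposition $w_i=w_{i+1}+\wbar_{i+1}$ shows that $x_i$ is $\beta_{i+1}$-approximate for $w_i$, hence also $\beta_i$-approximate since $\beta_{i+1}\le\beta_i$. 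If instead~2(b) holds, then $x_i$ is $\beta_{i+1}$-approximate for $w_{i+1}$ and $x_i w_{i+1}\ge(\gamma/\beta_{i+1})\,x^\ast\wbar_{i+1}$ for every matching $x^\ast$, in particular for the maximum-weight matching of $\wbar_{i+1}$; this is exactly the setup of Lemma~\ref{lem: lr with omissions} with the role of its approximate solution played by $x_i$, with $\beta=\beta_{i+1}$ and $p=\gamma/\beta_{i+1}$, whose conclusion is that $x_i$ is $(\beta_{i+1}+\beta_{i+1}/\gamma)$-approximate for $w_i$, and $\beta_{i+1}+\beta_{i+1}/\gamma=\beta_{i+1}(1+1/\gamma)=\beta_i$. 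In both cases the invariant holds at $i$, closing the induction.

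One side condition needs care: invoking Lemma~\ref{lem: lr with omissions} requires $w_{i+1}[e]\le w_i[e]$ for every edge $e$. I would justify this from the fact that the residual weight functions produced by the weight-reduction steps of Section~\ref{sec: approximaton} are non-negative, so $w_{i+1}=w_i-\wbar_{i+1}$ is coordinatewise at most $w_i$. The step I expect to be the crux is not any individual inequality but the bookkeeping: verifying that the per-iteration loss factor $(1+1/\gamma)$ telescopes into exactly the closed form $\beta_i$, and that the inductive hypothesis we carry ("$\beta_{i+1}$-approximation for $w_{i+1}$") is literally the precondition the lemma uses to certify the properties of $x_i$, so the induction actually propagates. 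It is worth noting that $\alpha$ and $\gamma$ play no structural role in this lemma beyond appearing in the formula for $\beta_i$; the lemma is purely the $k$-fold composition of two approximation-preserving operations, namely Theorem~\ref{thm: lr original} and Lemma~\ref{lem: lr with omissions}.
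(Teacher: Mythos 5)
Your proof is correct and follows essentially the same route as the paper: downward induction with the invariant that $x_i$ is $\beta_i$-approximate for $w_i$, a case split on property 2(a) versus 2(b), applying Theorem~\ref{thm: lr original} in the first case and Lemma~\ref{lem: lr with omissions} in the second, together with the telescoping identity $\beta_i=\beta_{i+1}(1+1/\gamma)$. The one step you flag that the paper leaves implicit is the hypothesis $w_{i+1}[e]\le w_i[e]$ required by Lemma~\ref{lem: lr with omissions}, which is a small but welcome extra care.
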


\begin{proof}
We prove,
by induction on $i$ ranging from $k+1$ down to $1$,
that $x_i$ is a $\beta_i$-approximate solution for $w_i$.

The base, $i=k+1$, is trivial by the assumption on $x_{k+1}$.

Assume the claim is true for $x_{i+1}$,
then condition $1$ holds for $x_i$.
If condition $2(a)$ holds,
then by condition $1$
and the local-ratio theorem
(Theorem~\ref{thm: lr original}),
$x_i$ is a $\beta_{i+1}$-approximate solution for $w_i$.
Because $\beta_i > \beta_{i+1}$,
$x_i$ is also a $\beta_i$-approximate solution for $w_i$.
If condition $2(b)$ holds,
then from condition $1$ and
Lemma~\ref{lem: lr with omissions} we deduce that
$x_{i}$ is a $(\beta_{i+1}+\beta_{i+1}/\gamma)$-%
approximate solution for $w_i$.
The definition of $\beta_i$ yields:
\begin{align*}
	\beta_{i+1}+\beta_{i+1}/\gamma
	&=  (1+1/\gamma)\cdot 2\alpha(1+1/\gamma)^{k+1-(i+1)}\\
	&=  2\alpha(1+1/\gamma)^{k+1-i}
	=  \beta_i.
\end{align*}
Specifically,
$x_1$ is a $\beta_1$-approximate solution for $w_1$,
and the proof is complete.
\end{proof}

\section{A Semi-Streaming Algorithm}
\label{sec: semi streaming algo}
We present a $(2+\epsilon)$-approximation algorithm
for the MWM problem using our extension of the local-ratio technique.
This algorithm is suitable for a streaming model which has no space constraints,
but not for the semi-streaming model.
We then present a lightweight variant of the algorithm,
which obeys the space constraints of the semi-streaming model.

The new algorithm is similar to Algorithm~\ref{alg: MWM-seq}:
it performs a series of weight reduction steps
defining a series of reduced weight functions $\set{w_i}$,
and then constructs a series of approximate solutions $\set{x_i}$.
To prove the desired approximation ratio is achieved,
we use Lemma~\ref{lem: approx-ratio}
as a substitute for the local-ratio theorem.

We start by presenting the challenges
posed by the semi-streaming model,
and the ways in which the new algorithm deals with them.
Let $e_i$ be the edge considered in iteration $i$.

\paragraph{Retroactive weight reduction}
The sequential algorithm constructs $w_{i+1}$ from $w_i$
using an edge $e_i$,
by reducing $w_i(e_i)$ form  the weight of every $e' \in N^+(e_i)$.
This cannot be done directly in the semi-streaming model,
as some edges of $N^+(e_i)$ might only arrive after $e_i$ is processed.
Instead, the algorithm keeps a variable $\phi_i(v) = \sum_{j=1}^i w_j[e_j]$
for every node $v\in V$.
When a new edge $e=(u,u')$ arrives,
its reduced weight is first computed,
by reducing $\phi_{i-1}(u)$ and $\phi_{i-1}(u')$ from its original weight.

\paragraph{Removing edges from the stack}
In the sequential algorithm,
the stack may grow to hold all of the graph edges.
Lemma~\ref{lem: approx-ratio} presents conditions
under which an approximate solution for $w_{i+1}$
is also an approximate solution for $w_{i}$.
When these conditions are met,
we may remove the edge $e_i$ from the stack,
which we use in order to make sure that
the stack's size does not exceed $O(n \log n)$ edges.

\paragraph{Assuring edge-weight growth}
In order to make sure edges are removed from the stack,
we force a small but consistent growth in the edge weights
around each node.
Roughly speaking,
the edge weights grow geometrically by a multiplicative
$\alpha$ factor;
after a logarithmic number of new edges considered,
the weights grow large enough to allow the algorithm
to neglect the older edges and remove them from the stack.

\subsection{\lrmwm{}}

\lrmwm{} (Algorithm~\ref{alg: MWM}) has two phases:
in the first phase,
it iterates over the edges
and pushes chosen edges into a stack.
In the second phase,
the edges are popped out of the stack
and added greedily to the matching.

The algorithm begins with an edge-weight function $w_1$,
given as input.
For each node $v$,
the algorithm explicitly maintains a non-negative weight function $\phi_i(v)$,
which is used to filter edges (Line~\ref{line: filter light edges}):
an edge $e=(u,u')$ processed at iteration $i$ is \emph{light}
if $w_1[e] \leq \alpha(\phi_{i-1}(u) + \phi_{i-1}(u'))$,
and \emph{heavy} otherwise.
In iteration $i$,
the algorithm processes the incoming edges,
ignoring light edges until a heavy edge is encountered.
This edge, denoted by $e_i$, is used to update $w_{i}$ and $\phi_i(v)$.
Eventually,
all heavy edges are denoted with sub-indexes ($e_i$),
while the light edges are left un-tagged ($e$).

When an edge $e=(u,u')$ is processed in iteration $i$,
the algorithm performs all weight reduction steps on $e$
retroactively using $\phi_{i-1}(u)$ and $\phi_{i-1}(u')$,
to set the value of $w_i[e]$.
It decides between reducing $\phi_{i-1}(u) + \phi_{i-1}(u')$ or $\alpha(\phi_{i-1}(u) + \phi_{i-1}(u'))$ from the weight of $e$,
in a way that guarantees a geometric growth of $\phi$,
implying a bound on the size of the stack.

For every node $v$, we hold a queue $E_i(v)$. 
This is a list of the heavy edges containing the node $v$ 
currently present in the stack. 
Upon the arrival of a heavy edge $e_i=(u,u')$, 
we perform a weight reduction step:
$\phi_{i-1}(u)$ and $\phi_{i-1}(u')$ are increased by $w_i[e_i]$,
and $e_i$ is pushed into the stack. 
We also enqueue $e_i$ in $E_i(u)$ and $E_i(u')$.
If the size of $E_i(u)$ or $E_i(u')$ exceeds a certain bound, 
we dequeue an edge from the exceeding queue, 
and remove it from the stack.
For the sake of analysis, 
we do not remove edges from the stack in Algorithm~\ref{alg: MWM}, 
but only replace them by a $\bot$ sign.

In the second phase, the algorithm unwinds the stack,
adding edges greedily to the matching while ignoring $\bot$ symbols.
The usage of the $\bot$ symbol is replaced by 
deletion of the relevant edge
in the semi-streaming algorithm, presented in the next subsection.

\RestyleAlgo{boxruled}
\LinesNumbered
\begin{algorithm}[t]
	\DontPrintSemicolon
	\caption{\texttt{MWM-seq}($V,E,w$). A sequential approximation algorithm for MWM}
	\label{alg: MWM}
	$S \gets $ empty stack \\
	$w_1 \gets w$;
	$\phi_0\gets \vec{0}$;
	$c_0\gets \vec{0}$ \,  \tcp*{$c_i$ is only used for the proof}
	$\forall v\in V: E_0(v) \gets $ empty queue\\
	$i\gets 1$\\
	
	\ForEach{$e=(u,u')\in E$}
	{
		\lIf(\tcp*[f]%
		{Implicit: $\wbar_{j+1}[e] \gets \alpha  w_j[e_j]$ for every $e_j\in N(e)$})%
		{$w_1[e] \leq \alpha(\phi_{i-1}(u) + \phi_{i-1}(u'))$}%
		{continue}%
		\label{line: filter light edges}
		$e_i \gets e$ \label{line: set ei}\\
		$S.\text{push}(e_i)$\\
		$w_i[e_i]\gets w_1[e_i]-
		(\phi_{i-1}(u) + \phi_{i-1}(u'))$ 
		\tcp*{Implicit: $\wbar_{j+1}[e_i] \gets w_j[e_j]$
			\label{line: set ei's weight}
			for every $e_j\in N^+(e_i)$}
		$\phi_{i} \gets \phi_{i-1}$; $E_{i} \gets E_{i-1}$;
		$c_{i} \gets c_{i-1}$\\
		\ForEach{$v \in e_i$}
		{
			$c_i(v) \gets c_i(v) + 1$ \\
			$E_i(v).\text{enqueue}(e_i)$ \\
			$\phi_i(v) \gets \phi_{i-1}(v) + w_i[e_i]$
			\label{line: update phiiv}\\
			\If{$ (\alpha-1)\alpha^{\size{E_i(v)}-2} >  2\alpha\gamma$%
				\label{line: omit obsolete edge}}
			{
				$e_j\gets E_i(v).\text{dequeue}()$
				\label{line: dequeue}\\
				$S[j] \gets \bot$ \\
			}
		}
		$i \gets i+1$\\
	}
	$k\gets \size{S}$ \\
	$x_{k+1} \gets  \vec{0}$\\
	\For{$i\gets k$ down to $1$}
	{
		$x_{i} \gets x_{i+1}$ \\
		$e_i \gets  S[i]$\\
		\lIf{$e_i = \bot$}%
		{continue}%
		\label{line: edge deleted cond}
		\lIf{$\forall e\in N(e_i):$ $x_{i}[e] = 0$}%
		{$x_{i}[e_i] \gets 1$}%
		\label{line: add edge cond}
	}
	return $x_1$
\end{algorithm}

We start the analysis of \lrmwm{} by proving that the
node-weight functions $\phi_i(v)$ grow geometrically with $i$.
In the algorithm,
the variable $c_i(v)$ counts the heavy edges containing $v$ that
arrive until iteration $i$.
Its value is not used in the algorithm itself;
we only use it in the proof,
to bound from below the growth $\phi(v)$.
In various places in the analysis we consider the expression $c_j(v) - c_i(v)$,
which is the number of heavy edges added to $v$ from iteration $i$ until iteration $j$. We eventually show that the reduced weights of heavy edges exhibit a growth pattern exponential in $c_j(v) - c_i(v)$.

\begin{lemma}
\label{lem: phi grows}
For every $v \in V$ and $j\geq i$, 
$\phi_j(v) \geq \alpha^{c_j(v) - c_i(v)}\phi_i(v)$.
\end{lemma}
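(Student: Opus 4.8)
The plan is a straightforward induction on $j$, with $v$ and $i$ held fixed, exploiting the fact that $\phi(v)$ changes only at iterations that process a heavy edge incident to $v$, and that each such change multiplies $\phi(v)$ by at least a factor of $\alpha$.

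The single substantive ingredient is the local claim: if $e_\ell=(u,u')$ is the heavy edge processed at iteration $\ell$ and $v\in e_\ell$, then $\phi_\ell(v)\geq\alpha\,\phi_{\ell-1}(v)$. For this, observe first that a parallel easy induction on $\ell$ shows that every $\phi_{\ell}(\cdot)$ is non-negative and every $w_\ell[e_\ell]$ is positive: $\phi_0=\vec 0$, and if $\phi_{\ell-1}\geq\vec 0$ componentwise, then heaviness (Line~\ref{line: filter light edges}) together with $\alpha\geq 1$ gives $w_\ell[e_\ell]=w_1[e_\ell]-(\phi_{\ell-1}(u)+\phi_{\ell-1}(u'))>0$, whence $\phi_\ell\geq\phi_{\ell-1}\geq\vec 0$ by the update in Line~\ref{line: update phiiv}. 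Now, substituting the defining identity $w_1[e_\ell]=w_\ell[e_\ell]+(\phi_{\ell-1}(u)+\phi_{\ell-1}(u'))$ into the heaviness inequality $w_1[e_\ell]>\alpha(\phi_{\ell-1}(u)+\phi_{\ell-1}(u'))$ yields $w_\ell[e_\ell]>(\alpha-1)(\phi_{\ell-1}(u)+\phi_{\ell-1}(u'))\geq(\alpha-1)\phi_{\ell-1}(v)$, and therefore $\phi_\ell(v)=\phi_{\ell-1}(v)+w_\ell[e_\ell]\geq\alpha\,\phi_{\ell-1}(v)$.

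Given this, the induction closes immediately. The base case $j=i$ is the trivial $\phi_i(v)\geq\alpha^0\phi_i(v)$. For the step, assume $\phi_{j-1}(v)\geq\alpha^{c_{j-1}(v)-c_i(v)}\phi_i(v)$ and consider iteration $j$, which processes the heavy edge $e_j$ (recall the counter is advanced only for heavy edges, so there is no separate ``light iteration'' case). If $v\notin e_j$, then neither $\phi(v)$ nor $c(v)$ is modified, so the inequality is inherited verbatim. If $v\in e_j$, then $c_j(v)=c_{j-1}(v)+1$ and, by the local claim, $\phi_j(v)\geq\alpha\,\phi_{j-1}(v)\geq\alpha^{\,c_{j-1}(v)-c_i(v)+1}\phi_i(v)=\alpha^{\,c_j(v)-c_i(v)}\phi_i(v)$. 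I do not anticipate any genuine obstacle here; the one point worth a moment's care is that the stack-cleanup steps (the dequeue and the assignment $S[j]\gets\bot$ around Line~\ref{line: dequeue}) touch neither $\phi$ nor $c$, so $\phi(v)$ is indeed monotone non-decreasing in the iteration index and the accounting above is valid.
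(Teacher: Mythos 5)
Your proof is correct and follows essentially the same route as the paper's: fix $i$, induct on $j$, and in the case $v\in e_j$ combine the heaviness test with the retroactive weight-reduction identity to deduce $\phi_j(v)\geq\alpha\,\phi_{j-1}(v)$. The only difference is cosmetic: you spell out the non-negativity of $\phi$ (needed to drop the other endpoint's term) and the fact that the dequeue/$\bot$ bookkeeping leaves $\phi$ and $c$ untouched, which the paper treats as implicit.
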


\begin{proof}
We fix $i$
and prove the lemma by induction on $j$, where $j\geq i$.
The base case, $j=i$, is trivial.

For $j>i$, we consider two cases:
if $v \notin e_j$ then $c_{j}(v) = c_{j-1}(v)$,
so $\phi_{j}(v) = \phi_{j-1}(v) \geq \alpha^{c_{j-1}(v) - c_i(v)}
= \alpha^{c_{j}(v) - c_i(v)}$
by the induction hypothesis.

Otherwise, $e_j=(v,u)$ for some $u\in V$,
and
\begin{align*}
\phi_{j}(v)
&=     \phi_{j-1}(v) + w_j[e_j]
\tag{Line~\ref{line: update phiiv}} \\
&\geq  \phi_{j-1}(v) + (\alpha-1)(\phi_{j-1}(v) + \phi_{j-1}(u))
\tag{Line~\ref{line: filter light edges}} \\
&\geq  \phi_{j-1}(v) + (\alpha-1)\phi_{j-1}(v)
\tag{Line~\ref{line: set ei's weight}} \\
&=     \alpha \phi_{j-1}(v)\\
&\geq  \alpha\cdot \alpha^{c_{j-1}(v) - c_i(v)} \phi_i(v)
\tag{induction hypothesis} \\
&=      \alpha^{c_{j}(v) - c_i(v)} \phi_i(v),
\tag{$v\in e_j$ implies $c_{j}(v) = c_{j-1}(v) +1$} 
\end{align*}
as desired.
\end{proof}

Consider the sequences of reduced and residual
edge-weight functions,
$w_2, \ldots, w_{k+1}$ and $\wbar_2, \ldots, \wbar_{k+1}$,
induced by the algorithm.
Note that these weight functions are defined on the fly:
when an edge $e$ arrives, 
it implicitly sets $w_{j+1}(e)$ and $\wbar_{j+1}(e)$
of each of its adjacent heavy edges $e_j$,
where $e_j$ may arrive before or after $e$.
Thus, the values of $w_i$ and $\wbar_{i}$ 
are completely determined only when the first phase ends,
and so does the length $k$ of the sequences.

The weight functions are defined inductively as follows.
We formally define $w_1 = w$,
where $w$ is the function given as input.
The edge $e_i$ is used to split
the weight function $w_i$ into $w_{i+1}$ and $\wbar_{i+1}$,
the latter defined by
\begin{align}
\wbar_{i+1}[e'] =
\begin{cases}
w_i[e_i]         & e'=e_i; \\
w_i[e_i]         & e'\in N(e_i) \text{ and $e'$ is heavy}; \\
\alpha w_i[e_i]  & e'\in N(e_i) \text{ and $e'$ is light}; \\
0                & \text{otherwise.}
\end{cases}
\end{align}
and the former by
$w_{i+1} = w_i - \wbar_{i+1}$.
The length $k$ is the number of heavy edges
encountered in the first phase.
Note that $\wbar$ is non-negative,
so $w_i[e]$ is a non-increasing function of $i$,
for any fixed edge $e$.

The next lemma focuses on a node $v$ and two heavy edges adjacent to it,
$e_i$ and $e_j$.
It asserts that for $j>i$, the reduced weight at iteration $i+1$ of a heavy edge $e_j$ grows geometrically with respect to $w_i[e_i]$.

\begin{lemma}
\label{lem: edge weight inequality}
Let $e_i, e_j \in E$
such that $j > i$
and $e_i\cap e_j  = \set{v}$.
Then $w_{i+1}[e_j]
>(\alpha-1) \alpha^{c_j(v)-c_i(v)-1} w_{i}[e_i]$.
\end{lemma}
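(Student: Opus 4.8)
The plan is to track the weight of $e_j$ through the weight reduction steps from iteration $i$ up to iteration $j$, and to lower-bound it using the geometric growth of $\phi$ established in Lemma~\ref{lem: phi grows}. First I would observe that when $e_j$ is processed (at iteration $j$), its reduced weight is set in Line~\ref{line: set ei's weight} to $w_j[e_j] = w_1[e_j] - (\phi_{j-1}(v) + \phi_{j-1}(u'))$, where $e_j = (v,u')$. Since $e_j$ is heavy, the filter condition in Line~\ref{line: filter light edges} fails, i.e.\ $w_1[e_j] > \alpha(\phi_{j-1}(v) + \phi_{j-1}(u'))$, so $w_j[e_j] > (\alpha-1)(\phi_{j-1}(v) + \phi_{j-1}(u')) \geq (\alpha-1)\phi_{j-1}(v)$.

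Next I would relate $w_j[e_j]$ to $w_{i+1}[e_j]$. The only reduction steps affecting $e_j$ between iteration $i+1$ and iteration $j$ come from heavy edges $e_\ell$ adjacent to $e_j$ with $i < \ell < j$; each such step reduces $w_\ell[e_\ell]$ or $\alpha w_\ell[e_\ell]$ from $e_j$'s weight, and all these reductions are nonnegative (the residual weights $\wbar$ are nonnegative, as noted before the lemma). Hence $w_j[e_j] \le w_{i+1}[e_j]$ — wait, this inequality goes the wrong way, so instead I should use that $e_j$'s weight is only \emph{reduced} as the index increases, i.e.\ $w_{i+1}[e_j] \ge w_j[e_j]$, so it suffices to lower-bound $w_j[e_j]$. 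That is exactly what the previous paragraph gives: $w_{i+1}[e_j] \ge w_j[e_j] > (\alpha-1)\phi_{j-1}(v)$.

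Now apply Lemma~\ref{lem: phi grows} with the pair of indices $i$ and $j-1$: $\phi_{j-1}(v) \ge \alpha^{c_{j-1}(v) - c_i(v)} \phi_i(v)$. Since $e_j$ is a heavy edge containing $v$, we have $c_j(v) = c_{j-1}(v) + 1$, so $c_{j-1}(v) - c_i(v) = c_j(v) - c_i(v) - 1$, giving $\phi_{j-1}(v) \ge \alpha^{c_j(v)-c_i(v)-1}\phi_i(v)$. Finally, I need to relate $\phi_i(v)$ to $w_i[e_i]$: since $e_i$ is a heavy edge containing $v$, processing it at iteration $i$ increases $\phi$ at $v$ by $w_i[e_i]$ (Line~\ref{line: update phiiv}), and $\phi_{i-1}(v) \ge 0$, so $\phi_i(v) \ge w_i[e_i]$. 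Chaining everything: $w_{i+1}[e_j] > (\alpha-1)\alpha^{c_j(v)-c_i(v)-1} w_i[e_i]$, as claimed.

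The main obstacle is getting the bookkeeping of indices exactly right — in particular, being careful that (a) the correct instance of Lemma~\ref{lem: phi grows} uses index $j-1$ rather than $j$ (because $w_j[e_j]$ is set \emph{before} $\phi_j(v)$ is updated, so it sees $\phi_{j-1}$), and (b) the monotonicity $w_{i+1}[e_j] \ge w_j[e_j]$ is used in the right direction, relying on nonnegativity of all intermediate residuals $\wbar_{\ell+1}$ for $i < \ell < j$. Everything else is a short chain of inequalities. Note the bound is strict because the heaviness of $e_j$ gives a strict inequality $w_1[e_j] > \alpha(\phi_{j-1}(v)+\phi_{j-1}(u'))$.
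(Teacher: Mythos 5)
Your proof is correct and follows essentially the same chain of inequalities as the paper's: use monotonicity to reduce to bounding $w_j[e_j]$, apply the heaviness condition at iteration $j$ to get $w_j[e_j] > (\alpha-1)\phi_{j-1}(v)$, invoke Lemma~\ref{lem: phi grows} between indices $i$ and $j-1$, use $\phi_i(v)\geq w_i[e_i]$ from Line~\ref{line: update phiiv}, and shift the exponent via $c_{j-1}(v)=c_j(v)-1$. You are if anything a bit more careful than the paper in tracking where the strict inequality comes from (the heaviness filter), which is what the lemma statement actually requires.
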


\begin{proof}
The lemma follows by a simple computation.
As $w_j[e]$ is a non-increasing
\begin{align*}
w_{i+1}[e_j]
& \geq w_{j}[e_j]
     \\
& \geq (\alpha-1) \phi_{j-1} (v)
    \tag{Lines~\ref{line: filter light edges}
         and~\ref{line: set ei's weight}} \\
& \geq (\alpha-1) \alpha^{c_{j-1}(v) - c_i(v)} \phi_i(v)
    \tag{Lemma~\ref{lem: phi grows}} \\
& \geq (\alpha-1) \alpha^{c_{j-1}(v) - c_i(v)} w_i[e_i]
    \tag{Line~\ref{line: update phiiv}} \\
& =    (\alpha-1) \alpha^{c_{j}(v) - c_i(v) - 1} w_i[e_i]
    \tag{$v\in e_j$ implies $c_{j-1}(v) = c_j(v) -1$}
\end{align*}
as desired.
\end{proof}

In the second loop of the algorithm,
the edges are taken out of the stack
and a solution is greedily constructed.
The algorithm's approximation ratio is
the approximation ratio of the solution $x_1$
on the original weight function $w_1$.
To bound this quantity,
we prove by induction that every $x_i$ is
a $\beta_i$-approximate solution for $w_i$.
We break our analysis into cases,
for which we need the next three lemmas.
First, we consider an edge $e_i$ which is evicted from the stack.

\begin{lemma}
	\label{lem: xi case 1}
If $x_{i+1}$ is a $\beta_{i+1}$-approximate solution for $w_{i+1}$
and the condition in Line~\ref{line: edge deleted cond} holds
for $e_i$,
then $x_i$ is a $\beta_i$-approximate solution for $w_i$.
\end{lemma}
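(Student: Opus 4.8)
The plan is to invoke Lemma~\ref{lem: approx-ratio}, or rather its inductive core, and simply verify that the hypotheses hold in the case at hand, namely when $e_i$ is the $\bot$-marked (evicted) edge. Concretely, I must check conditions~1 and~2 of that lemma for this particular $i$, starting from the assumption that $x_{i+1}$ is a $\beta_{i+1}$-approximate solution for $w_{i+1}$. Condition~1 asks that $x_i$ be a $\beta_{i+1}$-approximate solution for $w_{i+1}$; since the second loop sets $x_i \gets x_{i+1}$ and then, when $e_i=\bot$ (Line~\ref{line: edge deleted cond}), does nothing else, we have $x_i = x_{i+1}$, so condition~1 is immediate. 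It then remains to establish condition~2, and since $e_i$ was evicted I expect to use alternative~2(b): $x_i w_{i+1} \geq (\gamma/\beta_{i+1}) x^\ast \wbar_{i+1}$ for every matching $x^\ast$.

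To prove~2(b), first bound $x^\ast \wbar_{i+1}$ from above. By the definition of $\wbar_{i+1}$, any matching $x^\ast$ picks up weight only from $e_i$ and its neighbors, and those contribute at most $\alpha w_i[e_i]$ each, with at most two such edges in a matching touching $e_i$; so $x^\ast \wbar_{i+1} \leq 2\alpha w_i[e_i]$. Next I need a lower bound on $x_i w_{i+1} = x_{i+1} w_{i+1}$. Here is where eviction enters: the condition in Line~\ref{line: omit obsolete edge} that caused $e_i$ to be dequeued from some queue $E_\ell(v)$ at iteration $\ell > i$ says $(\alpha-1)\alpha^{|E_\ell(v)|-2} > 2\alpha\gamma$, and $|E_\ell(v)|$ relates to $c_\ell(v)-c_i(v)$ (the number of heavy edges at $v$ between iterations $i$ and $\ell$), since exactly those edges are in the queue. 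The edge $e_\ell$ dequeuing $e_i$ is a heavy neighbor of $e_i$ through $v$, so Lemma~\ref{lem: edge weight inequality} gives $w_{i+1}[e_\ell] > (\alpha-1)\alpha^{c_\ell(v)-c_i(v)-1} w_i[e_i]$, and combining with the eviction inequality yields $w_{i+1}[e_\ell] > 2\alpha\gamma\, w_i[e_i]$ (up to a harmless factor I will track carefully). Since $e_\ell$ is still in the stack when $x_{i+1}$ is constructed (or a neighbor of it is), the greedy construction guarantees $x_{i+1}$ contains an edge incident to $e_\ell$ of reduced weight at least $w_{i+1}[e_\ell]$ — more precisely, Line~\ref{line: add edge cond} ensures that for every heavy $e_\ell$ remaining in the stack, $x_{\ell}$ (hence $x_{i+1}$, as $i+1 \le \ell$ and weights are monotone) has a neighbor of $e_\ell$ selected — so $x_{i+1} w_{i+1} \geq w_{i+1}[e_\ell] > 2\alpha\gamma\, w_i[e_i]$. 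Dividing, $x_i w_{i+1} / x^\ast\wbar_{i+1} > 2\alpha\gamma\, w_i[e_i] / (2\alpha w_i[e_i]) = \gamma \geq \gamma/\beta_{i+1}$, which is~2(b).

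With conditions~1 and~2(b) verified, the inductive step inside the proof of Lemma~\ref{lem: approx-ratio} (the case where $2(b)$ holds, via Lemma~\ref{lem: lr with omissions} and the telescoping identity $\beta_{i+1}(1+1/\gamma) = \beta_i$) immediately gives that $x_i$ is a $\beta_i$-approximate solution for $w_i$, which is exactly the claim.

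The main obstacle I anticipate is the bookkeeping connecting the eviction trigger to a concrete surviving witness edge in $x_{i+1}$: I must be sure that the edge $e_\ell$ whose arrival dequeued $e_i$ is itself still present in the stack (not subsequently $\bot$-ed) at the time $x_{i+1}$ is formed, or else identify whichever still-present edge the greedy step used to "cover" $e_\ell$ and argue its reduced weight is still large enough — and I must handle the indices carefully, since $w_{i+1}[e_\ell] \ge w_{\ell}[e_\ell]$ requires the monotonicity of reduced weights in the iteration index and the fact that $e_\ell$ is heavy so it appears with a sub-index. The parameter $\gamma = n^2/\ln\alpha$ is deliberately generous, so the inequalities should close with room to spare; the care is entirely in matching "queue size exceeds the bound" with "a heavy neighbor of geometrically larger reduced weight exists and is witnessed by the greedy solution."
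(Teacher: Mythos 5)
Your overall structure is right — you reduce to verifying conditions 1 and 2(b) of Lemma~\ref{lem: approx-ratio}, establish $x_i=x_{i+1}$ for condition~1, bound $x^\ast\wbar_{i+1}\leq 2\alpha w_i[e_i]$, trace the eviction to an iteration $\ell>i$ where $(\alpha-1)\alpha^{c_\ell(v)-c_i(v)-1}>2\alpha\gamma$ held, and apply Lemma~\ref{lem: edge weight inequality} to get $w_{i+1}[e_\ell]\geq 2\alpha\gamma\, w_i[e_i]$. All of that matches the paper exactly.

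Where you go wrong is in the lower bound on $x_{i+1}w_{i+1}$. You try to argue directly from the greedy unwinding that $x_{i+1}w_{i+1}\geq w_{i+1}[e_\ell]$, asserting that $x_{i+1}$ ``contains an edge incident to $e_\ell$ of reduced weight at least $w_{i+1}[e_\ell]$.'' This claim does not follow, and you yourself flag the obstacle: $e_\ell$ may itself be evicted before the second phase; even when it is not, Line~\ref{line: add edge cond} only guarantees that \emph{some} edge of $N^+(e_\ell)$ is selected, with no relation between that edge's $w_{i+1}$-weight and $w_{i+1}[e_\ell]$; and even if such a large-weight edge were chosen, $x_{i+1}w_{i+1}$ is a sum over the whole matching under a weight function $w_{i+1}$ that takes negative values, so one big term does not bound the sum from below. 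You never close this gap, and the bound you want ($x_{i+1}w_{i+1}\geq 2\alpha\gamma\, w_i[e_i]$) is $\beta_{i+1}$ times too strong to actually be needed — a sign the argument is straining.

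The paper's proof avoids this entirely by using the inductive hypothesis you already have in hand: since $\set{e_\ell}$ is a feasible matching in $(V,E,w_{i+1})$ of weight $w_{i+1}[e_\ell]\geq 2\alpha\gamma\, w_i[e_i]$, the optimum for $w_{i+1}$ is at least this large, and because $x_{i+1}$ is a $\beta_{i+1}$-approximation for $w_{i+1}$ we get $\beta_{i+1}\,x_{i+1}w_{i+1}\geq 2\alpha\gamma\, w_i[e_i]$. Combined with $x^\ast\wbar_{i+1}\leq 2\alpha w_i[e_i]$ this gives $x_{i+1}w_{i+1}\geq(\gamma/\beta_{i+1})x^\ast\wbar_{i+1}$, which is precisely condition 2(b). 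Replace your greedy-witness argument with this one-line use of the approximation hypothesis and the proof is complete.
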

\begin{proof}
Since the condition in Line~\ref{line: edge deleted cond} holds,
we have $x_i= x_{i+1}$.
This immediately guarantees that $x_i$ is a feasible solution and
that condition $1$ of Lemma~\ref{lem: approx-ratio} holds.
We show that condition $2(b)$  of Lemma~\ref{lem: approx-ratio}
holds as well.

As the condition in Line~\ref{line: edge deleted cond} holds,
we know that in some iteration $j$ of the first phase, $j>i$,
the condition in Line~\ref{line: omit obsolete edge} held.
That is,
for some endpoint $v$ of $e_i$, an edge $e_{j}$ with $e_i\cap e_{j}=\set{v}$ was enqueued into $E_{j}(v)$,
the condition
$(\alpha-1)\alpha^{\size{E_{j}(v)}-2} >  2\alpha\gamma$ held,
and $e_i$ was then dequeued from $E_{j}(v)$.

Every enqueue operation to $E_i(v)$ is accompanied by
an increases of $c_i(v)$ by $1$,
so when the condition in Line~\ref{line: omit obsolete edge}
was checked, $e_i$ and $e_{j}$ were the oldest and newest elements in $E_{j}(v)$, respectively,
and the size of $E_{j}(v)$ was exactly $c_{j}(v) - c_i(v) +1$.
Thus,
$(\alpha-1)\alpha^{c_{j}(v)-c_{i}(v) -1} \geq 2\alpha\gamma$.

Using this inequality and Lemma~\ref{lem: edge weight inequality},
we have
$$w_{i+1}[e_{j}]
\geq (\alpha-1)\alpha^{c_{j}(v)-c_{i}(v) -1} w_{i}[e_{i}]
\geq 2\alpha\gamma w_{i}[e_{i}].$$
Hence, the single edge $e_{j}$
is a matching of weight at least $2\alpha\gamma w_{i}[e_i]$
in $w_{i+1}$.
As $x_{i+1}$ is a $\beta_{i+1}$-approximate solution for $w_{i+1}$,
we have $\beta_{i+1} x_{i+1}w_{i+1} \geq 2\alpha\gamma w_i[e_i]$.

The definition of $\wbar_{i+1}$ guarantees it has the following structure:
\begin{align}
\wbar_{i+1}[e'] =
\begin{cases}
w_i[e_i] & e'=e_i; \\
\alpha w_i[e_i]\text{ or } w_i[e_i] & e'\in N(e_i); \\
0 & \text{otherwise.}
\end{cases}
\end{align}
Thus, any solution $x^\ast$ for $\wbar_{i+1}$
contains at most two edges,
of weight at most $\alpha w_i[e_i]$,
i.e.\ $2\alpha w_i[e_i] \geq x^\ast \wbar_{i+1}$.
The last two inequalities guarantee any solution
$x^\ast$ satisfies
$$   (\beta_{i+1} / \gamma)x_{i+1}w_{i+1}
\geq 2\alpha w_i[e_i]
\geq x^\ast \wbar_{i+1} $$
so
$ x_{i}w_{i+1} = x_{i+1}w_{i+1}
\geq (\gamma/\beta_{i+1})x^\ast \wbar_{i+1} $,
and condition $2(b)$ of Lemma~\ref{lem: approx-ratio} holds.
\end{proof}

We now turn to the case of an edge $e_i$ that is not evicted from the stack.
\begin{lemma}
	\label{lem: xi case 2}
If $x_{i+1}$ is a $\beta_{i+1}$-approximation for $w_{i+1}$
and the condition on Line~\ref{line: edge deleted cond} does not hold for $e_i$,
then $x_i$ is a $\beta_i$-approximation for $w_i$.
\end{lemma}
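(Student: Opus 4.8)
I want to show that when $e_i$ survives in the stack (the condition on Line~\ref{line: edge deleted cond} fails), the greedily-constructed $x_i$ is a $\beta_i$-approximation for $w_i$. Since $e_i$ was never replaced by $\bot$, in the second phase we reach $e_i$ with $x_i = x_{i+1}$ and then possibly set $x_i[e_i] \gets 1$ if no neighbor of $e_i$ is currently in the solution. The natural route is to verify the hypotheses of Lemma~\ref{lem: approx-ratio} for index $i$, specifically condition~$1$ and condition~$2(a)$, so that the local-ratio machinery does the rest.

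**Key steps.** First I would check condition~$1$: $x_i$ must be a feasible matching and a $\beta_{i+1}$-approximate solution for $w_{i+1}$. Feasibility follows because the only modification is possibly adding $e_i$, and the \texttt{if} on Line~\ref{line: add edge cond} adds $e_i$ only when all of $N(e_i)$ is absent from $x_i$, so no two edges in $x_i$ share a node. For the approximation guarantee with respect to $w_{i+1}$: by construction $w_{i+1}[e_i] = w_i[e_i] - \wbar_{i+1}[e_i] = 0$ (the residual takes exactly $w_i[e_i]$ off $e_i$ itself), so adding $e_i$ does not change the value $x_i w_{i+1} = x_{i+1} w_{i+1}$; combined with the hypothesis that $x_{i+1}$ is $\beta_{i+1}$-approximate for $w_{i+1}$, condition~$1$ holds.

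Second, I would establish condition~$2(a)$: $x_i$ is a $\beta_{i+1}$-approximate solution for $\wbar_{i+1}$ — equivalently, a $(2\alpha)$-approximation suffices since $\beta_{i+1} \geq 2\alpha$. This is exactly the content of Lemma~\ref{lem: residual is an approximate sol}, whose hypothesis is that $x_i[e'] \neq 0$ for some $e' \in N^+(e_i)$; recall $\wbar_{i+1}$ has the precise form required by that lemma (weight $w_i[e_i]$ on $e_i$, and either $w_i[e_i]$ or $\alpha w_i[e_i]$ on each neighbor). So the whole lemma reduces to arguing that after processing $e_i$ in the second loop, \emph{some} edge of $N^+(e_i)$ lies in $x_i$. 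If no neighbor of $e_i$ was in $x_i = x_{i+1}$ when we reached $e_i$, then the test on Line~\ref{line: add edge cond} succeeds and we set $x_i[e_i] = 1$, putting $e_i \in N^+(e_i)$ into the solution; otherwise some neighbor is already there. Either way the hypothesis of Lemma~\ref{lem: residual is an approximate sol} is met, so $x_i$ is a $(2\alpha)$-approximate, hence $\beta_{i+1}$-approximate, solution for $\wbar_{i+1}$. Finally, invoking Lemma~\ref{lem: approx-ratio} (or just Theorem~\ref{thm: lr original} together with $\beta_i > \beta_{i+1}$) upgrades this to: $x_i$ is a $\beta_i$-approximate solution for $w_i$.

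**Main obstacle.** The only genuinely delicate point is making sure that $w_{i+1}[e_i] = 0$ and that $\wbar_{i+1}$ really has the shape demanded by Lemma~\ref{lem: residual is an approximate sol} — i.e., that the retroactive, on-the-fly definition of the residual weight functions (where neighbors of $e_i$ may have been processed either before or after $e_i$) still yields exactly the case structure $\{w_i[e_i] \text{ on } e_i;\ w_i[e_i] \text{ or } \alpha w_i[e_i] \text{ on } N(e_i);\ 0 \text{ else}\}$. Once that bookkeeping is pinned down against the implicit comments on Lines~\ref{line: filter light edges} and~\ref{line: set ei's weight}, the rest is a direct citation of the earlier lemmas; there is no new inequality to prove here, in contrast to Lemma~\ref{lem: xi case 1} where the geometric-growth bound was the crux.
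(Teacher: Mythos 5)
Your proof is correct and follows essentially the same route as the paper's: verify condition~$1$ of Lemma~\ref{lem: approx-ratio} via $w_{i+1}[e_i]=0$, verify condition~$2(a)$ by invoking Lemma~\ref{lem: residual is an approximate sol} (whose hypothesis holds in both the case where $e_i$ is added and the case where a neighbor is already present), and conclude using $2\alpha\leq\beta_{i+1}<\beta_i$. The only cosmetic difference is that the paper splits explicitly into the two sub-cases of Line~\ref{line: add edge cond}, whereas you treat them uniformly.
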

\begin{proof}
If the condition on Line~\ref{line: add edge cond} holds,
then $x_i$ is derived from $x_{i+1}$
by adding $e_i$ to $x_{i+1}$.
The condition in this line,
together with the assumption that $x_{i+1}$ is a matching,
guarantee that $x_i$ is a matching.
Since $\wbar_{i+1} [e_i] = w_i[e_i]$
and $w_{i+1} = w_i - \wbar_{i+1}$,
we have $w_{i+1} [e_i] = 0$.
Hence, $x_i w_{i+1} = x_{i+1} w_{i+1}$,
so $x_i$ is also a $\beta_{i+1}$-approximate solution
for $w_{i+1}$
and condition~$1$ of Lemma~\ref{lem: approx-ratio} holds.
By Lemma~\ref{lem: residual is an approximate sol},
$x_i$ is a $(2\alpha)$-approximate solution for $\wbar_{i+1}$,
and because $2\alpha \leq \beta_{i+1}$ it is also
a $\beta_{i+1}$-approximate solution to $\wbar_{i+1}$
and condition~$2(a)$ of Lemma~\ref{lem: approx-ratio} holds.

Finally, if the condition in Line~\ref{line: add edge cond} does not hold, we set $x_i = x_{i+1}$.
Then $x_i$ is a feasible matching
satisfying condition~$1$ of Lemma~\ref{lem: approx-ratio}.
The condition in Line~\ref{line: add edge cond} does not hold,
so $x_{i+1}[e']\neq 0$ for some $e'\in N^+[e_i]$,
and Lemma~\ref{lem: residual is an approximate sol} promises
$x_i$ is a $(2\alpha)$-approximation for $\wbar_{i+1}$.
As before,
$2\alpha \leq \beta_{i+1}$ proves that
condition~$2(a)$ of Lemma~\ref{lem: approx-ratio} holds.
\end{proof}

Finally, we show that when the first phase ends,
none of the reduced edge weights is positive.

\begin{lemma}
\label{lem: last w is non-positive}
At the end of the first phase,
$w_{k+1}[e] \leq 0$ for all $e\in E$.
\end{lemma}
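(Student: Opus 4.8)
The plan is to show that every edge $e\in E$, whether heavy or light, has $w_{k+1}[e]\le 0$, by tracking how the reduced weight of $e$ evolves as the first phase processes the stream. The key observation is that the quantity $\phi_i(u)+\phi_i(u')$ for the two endpoints $u,u'$ of an edge $e$ exactly records the total weight that has been retroactively subtracted from $e$ by all heavy edges processed so far. Concretely, I would first establish the bookkeeping identity that, after the first phase has processed $k$ heavy edges, the final reduced weight of an arbitrary edge $e=(u,u')$ satisfies $w_{k+1}[e]=w_1[e]-\bigl(\phi_k(u)+\phi_k(u')\bigr)$ when $e$ is heavy, and $w_{k+1}[e]=w_1[e]-\bigl(\phi_k(u)+\phi_k(u')\bigr)$ (up to the extra $\alpha$ factors on some terms, which only make it smaller) when $e$ is light; in both cases the right-hand side is at most $w_1[e]-\bigl(\phi_{i-1}(u)+\phi_{i-1}(u')\bigr)$ evaluated at the iteration $i$ where $e$ was first seen, since $\phi$ is monotone nondecreasing in the iteration index (each update in Line~\ref{line: update phiiv} adds $w_i[e_i]\ge 0$).

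With that identity in hand, I would split into the two cases. \textbf{Light edges:} if $e=(u,u')$ is light, then by the filtering test in Line~\ref{line: filter light edges} it was skipped at its arrival iteration $i$ precisely because $w_1[e]\le\alpha(\phi_{i-1}(u)+\phi_{i-1}(u'))$; since $\alpha\ge 1$ and since $\phi$ only grows afterwards, we get $w_1[e]\le\phi_k(u)+\phi_k(u')$ (in fact the $\alpha$-scaled residual contributions make $w_{k+1}[e]$ even more negative), so $w_{k+1}[e]\le 0$. \textbf{Heavy edges:} if $e=e_i$ is heavy, it was processed at some iteration $i$, and Line~\ref{line: set ei's weight} set $w_i[e_i]=w_1[e_i]-(\phi_{i-1}(u)+\phi_{i-1}(u'))$. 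Immediately afterwards, Line~\ref{line: update phiiv} increases both $\phi_i(u)$ and $\phi_i(u')$ by $w_i[e_i]$, so that $\phi_i(u)+\phi_i(u')=\phi_{i-1}(u)+\phi_{i-1}(u')+2w_i[e_i]\ge\phi_{i-1}(u)+\phi_{i-1}(u')+w_i[e_i]$. Combined with monotonicity of $\phi$ and the bookkeeping identity,
\begin{align*}
w_{k+1}[e_i]
&= w_1[e_i]-\bigl(\phi_k(u)+\phi_k(u')\bigr)\\
&\le w_1[e_i]-\bigl(\phi_i(u)+\phi_i(u')\bigr)\\
&\le w_1[e_i]-\bigl(\phi_{i-1}(u)+\phi_{i-1}(u')\bigr)-w_i[e_i]
= w_i[e_i]-w_i[e_i]=0.
\end{align*}

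I expect the main obstacle to be the first step — nailing down the bookkeeping identity precisely. One has to argue carefully that the retroactive reductions recorded implicitly in the comments of Algorithm~\ref{alg: MWM} (``$\wbar_{j+1}[e]\gets\alpha w_j[e_j]$ for every $e_j\in N(e)$'' and ``$\wbar_{j+1}[e_i]\gets w_j[e_j]$ for every $e_j\in N^+(e_i)$'') are consistent with the global definition of $\wbar_{i+1}$ given after the algorithm, and that summing these residual contributions over all heavy edges adjacent to $e$ yields exactly $\phi_k(u)+\phi_k(u')$ (for heavy $e$) or a quantity at least $\phi_k(u)+\phi_k(u')$ (for light $e$, because of the $\alpha$ multipliers). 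A subtlety is that an edge $e$ may also be itself one of the $e_j$'s contributing to its own reduced weight (the $e'=e_i$ case), and that both endpoints must be accounted for without double-counting an edge that is adjacent to $e$ at only one endpoint versus an edge sharing... actually no two distinct edges share both endpoints in a simple graph, so each heavy neighbor of $e$ contributes through exactly one endpoint, and the bookkeeping is a clean sum over $E_k(u)\cup E_k(u')$ together with the self-term. Once this identity is in place, the case analysis above is routine.
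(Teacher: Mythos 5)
Your high-level plan (split into heavy/light edges, use a bookkeeping identity relating $w_{k+1}$ to the $\phi$-potentials, and finish with monotonicity) is in the right spirit, and it is close to the paper's actual argument. But there are two concrete errors.

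First, your heavy-edge bookkeeping identity is wrong. Since the heavy edge $e_i=(u,u')$ itself contributes $w_i[e_i]$ to \emph{both} $\phi_k(u)$ and $\phi_k(u')$ (Line~\ref{line: update phiiv} updates $\phi$ at both endpoints), while it contributes only once to the total reduction on $e_i$ (the $e'=e_i$ case of $\wbar$), the correct identity is
\begin{equation*}
w_{k+1}[e_i] \;=\; w_1[e_i] - \bigl(\phi_k(u)+\phi_k(u')\bigr) + w_i[e_i],
\end{equation*}
not $w_1[e_i]-(\phi_k(u)+\phi_k(u'))$. Your chain of inequalities happens to terminate at $0$ because you also loosened $\phi_i(u)+\phi_i(u')$ from $\phi_{i-1}(u)+\phi_{i-1}(u')+2w_i[e_i]$ down to $\ge \cdots + w_i[e_i]$, and the two slacks cancel; the conclusion is correct but the identity it rests on is not. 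Second, in the light-edge case the inference ``since $\alpha\ge 1$ and $\phi$ only grows, we get $w_1[e]\le\phi_k(u)+\phi_k(u')$'' does not follow: the filter only gives $w_1[e]\le\alpha(\phi_{i-1}(u)+\phi_{i-1}(u'))\le\alpha(\phi_k(u)+\phi_k(u'))$, and the $\alpha$ cannot be dropped. You need to carry the $\alpha$ through (as you hint parenthetically), i.e., use $w_{k+1}[e]=w_1[e]-\alpha(\phi_k(u)+\phi_k(u'))$.

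The paper avoids both issues by not tracking $\phi_k$ at all for heavy edges: from $\wbar_{i+1}[e_i]=w_i[e_i]$ it observes $w_{i+1}[e_i]=0$ directly, and then uses monotonicity of $w_j[e]$ in $j$ (which holds since $\wbar_{j+1}\ge 0$). For light edges the paper shows $w_i[e]=w_1[e]-\alpha(\phi_{i-1}(u)+\phi_{i-1}(u'))\le 0$ at the iteration $i$ where $e$ was filtered, again closing with monotonicity. That route is both simpler and avoids the double-counting subtlety you ran into. Your proposal is fixable, but as written it contains a genuine bookkeeping error and an unjustified inequality.
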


\begin{proof}
Consider an edge $e$.
If $e=e_i$ is heavy then
$\wbar_{i+1} [e_i] = w_i[e_i]$ and $w_{i+1} = w_i - \wbar_{i+1}$
imply $w_{i+1} [e_i] = 0$.
The monotonicity of $w_i[e]$ completes the proof.

If $e=(u,u')$ is a light edge considered in iteration $i$,
then $w_1[e] \leq \alpha(\phi_{i-1}(u) + \phi_{i-1}(u'))$.
Line~\ref{line: update phiiv} guarantees
$$\phi_{i-1}(u) =
\sum_{\set{e_j\left|\substack{ u\in e_j\\ j\leq i-1} \right.}}
w_j[e_j],$$
and a similar claim holds for $u'$.
On the other hand,
$w_{j+1} = w_j - \wbar_{j+1}$ and
$\wbar_{j+1}[e] = \alpha w_j[e_j]$ for all $e_j\in N(e)$.
Hence $w_{j+1}[e] = w_j[e]-\alpha w_j[e_j]$,
and a simple induction implies
$$w_i[e]
=  w_1[e] -
\alpha \sum_{\set{e_j\left|\substack{e_j\in N(e)\\
j\leq i-1} \right.}}
w_j[e_j].$$
The last two equalities, together with the definition of $N(e)$,
imply
$w_i[e] = w_1[e] - \alpha (\phi_{i-1}(u) + \phi_{i-1}(u'))$.
The inequality $w_1[e] \leq \alpha(\phi_{i-1}(u) + \phi_{i-1}(u'))$
implies $w_i[e]\leq 0$ for all $e\in E$,
and the monotonicity of $w_i[e]$
completes the proof.
\end{proof}

We are now ready to prove the main theorem of this section.
\begin{theorem}
\lrmwm{} returns a $(2+\epsilon)$-approximation
for the MWM problem.
\end{theorem}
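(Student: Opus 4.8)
The plan is to combine the lemmas proved in this section into the hypotheses of Lemma~\ref{lem: approx-ratio}, and then bound the resulting approximation factor $\beta_1$. Recall that $\alpha=\sqrt{1+\epsilon/2}$, $\gamma=n^2/\ln\alpha$, and $\beta_i=2\alpha(1+1/\gamma)^{k+1-i}$, so $\beta_1=2\alpha(1+1/\gamma)^{k}$, where $k$ is the number of heavy edges encountered in the first phase.

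First I would verify the base case required by Lemma~\ref{lem: approx-ratio}, namely that $x_{k+1}$ is optimal for $w_{k+1}$: by Lemma~\ref{lem: last w is non-positive} every edge satisfies $w_{k+1}[e]\le 0$, so no matching has positive weight under $w_{k+1}$, and since \lrmwm{} sets $x_{k+1}\gets\vec 0$ (a matching of weight $0$), this is an optimal (hence $\beta_{k+1}$-approximate) solution for $w_{k+1}$. Then, for each $i$ from $k$ down to $1$, assuming $x_{i+1}$ is a $\beta_{i+1}$-approximate solution for $w_{i+1}$, I would check that $x_i$ satisfies conditions~1 and~2 of Lemma~\ref{lem: approx-ratio}. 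This is exactly a case split on whether $e_i$ was evicted from the stack, i.e.\ whether the condition in Line~\ref{line: edge deleted cond} holds: if it does, Lemma~\ref{lem: xi case 1} provides condition~1 and condition~2(b); if it does not, Lemma~\ref{lem: xi case 2} provides condition~1 and condition~2(a). These two cases are exhaustive, so Lemma~\ref{lem: approx-ratio} yields that $x_1$---the vector returned by the algorithm---is a $\beta_1$-approximate solution for $w_1=w$, the input weight function. (When condition~2(b) is used one needs the monotonicity $w_{i+1}[e]\le w_i[e]$, which holds because each $\wbar_{i+1}$ is non-negative.)

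It remains to show $\beta_1\le 2+\epsilon$. Each heavy edge is a distinct edge of $G$, so $k\le m<n^2$; using $1+x\le e^x$ and $\gamma=n^2/\ln\alpha$ this gives $(1+1/\gamma)^{k}\le e^{k/\gamma}\le e^{n^2/\gamma}=e^{\ln\alpha}=\alpha$, whence $\beta_1\le 2\alpha^2=2(1+\epsilon/2)=2+\epsilon$, completing the proof. There is no real obstacle left at this stage, since the substance is carried by Lemmas~\ref{lem: xi case 1}, \ref{lem: xi case 2}, \ref{lem: last w is non-positive}, and \ref{lem: approx-ratio}; the only points needing care are that the case split on Line~\ref{line: edge deleted cond} is genuinely exhaustive and that $k<n^2$---it is this bound, together with the choice of $\gamma$, that keeps the compounded loss $(1+1/\gamma)^k$ below $\alpha$, and it is also why Lemma~\ref{lem: last w is non-positive} is essential, as otherwise $x_{k+1}=\vec 0$ would be only feasible, not optimal, for $w_{k+1}$.
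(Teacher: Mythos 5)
Your proof is correct and follows essentially the same route as the paper's: verify $x_{k+1}=\vec 0$ is optimal via Lemma~\ref{lem: last w is non-positive}, invoke Lemmas~\ref{lem: xi case 1} and~\ref{lem: xi case 2} to establish the hypotheses of Lemma~\ref{lem: approx-ratio} via the exhaustive case split on Line~\ref{line: edge deleted cond}, and then bound $\beta_1 \le 2\alpha^2 = 2+\epsilon$ using $k \le n^2$ and $1+x \le e^x$. Your additional remarks on why the monotonicity $w_{i+1}[e]\le w_i[e]$ is needed and why $k<n^2$ is the crux of the final bound are accurate and not points of divergence.
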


\begin{proof}
By Lemma~\ref{lem: last w is non-positive},
the first loop ends when $w_{k+1} \leq \vec{0}$,
so $x_{k+1}= \vec{0}$ is indeed an optimal solution for $w_{k+1}$.

Assume $x_{i+1}$ is a $\beta_{i+1}$-approximate solution for $w_{i+1}$.
From Lemmas~\ref{lem: xi case 1} and \ref{lem: xi case 2}
we conclude that in all cases the conditions of	
Lemma~\ref{lem: approx-ratio} hold,
so $x_1$ is a $\beta_1$-approximate solution for $w=w_1$.

Substitute $\beta_1 = 2\alpha(1+1/\gamma)^{k}$,
$\alpha = \sqrt{1+\epsilon/2}$ and
$\gamma = n^2 / \ln (\alpha)$,
and note $k\leq m \leq n^2$,
to get
\begin{align*}
	 \beta_1
	&\leq  2\alpha(1+1/\gamma)^{n^2}\\
	&=     2\alpha \left(1+ (\ln \alpha) /n^2 \right)^{n^2}\\
	& \leq  2\alpha e^{\ln \alpha}
	=     2+\epsilon.
\end{align*}

The desired approximation ratio is achieved.
\end{proof}

\subsection{Implementing \lrmwm{} in the Semi-Streaming Model}

In the previous section we showed that
\lrmwm{} computes a $(2+\epsilon)$-approximation for MWM.
In the semi-streaming model,
we must obey space constraints in addition to maintaining a good approximation ratio.
In the presentation of the sequential algorithm
we ignored the space constrains:
we did not remove edges from the stack,
and we represented the temporary solutions as the vectors $x_i$
of size $\Theta(n^2)$.

In order to follow the space constraints,
we replace any insertion of $\bot$ into the stack
by a removal of the relevant edge,
and the vectors $x_i$ by a single set containing the current matching.
For the sake of completeness,
we present \mwmsemi{} (Algorithm~\ref{alg: MWM semistream}),
an implementation of \lrmwm{} in the semi-streaming model.
The correctness of \mwmsemi{}
is derived directly from the correctness of \lrmwm{},
so we only need to prove it obeys the space constraints.

After omitting notations and auxiliary variables from \lrmwm{},
we are left only with three types of data structures
in \mwmsemi{}:
$M$ is the matching constructed,
$S$ is the stack
and $E(v)$ is a queue of edges from $S$ that
contain node $v$.
Every edge $(u,u')$ that is added to $S$
is also added to $E(u)$ and $E(u')$.
When $(u,u')$ is removed from $E(u)$ or from $E(u')$,
it is also removed from $S$,
implying $\size{S} \leq \sum_v \size{E(v)}$.
The next lemma bounds the size of $E(v)$ for every $v$.

\RestyleAlgo{boxruled}
\LinesNumbered
\begin{algorithm}[t]
	\DontPrintSemicolon
	\caption{\texttt{MWM-semi}($V,E,w$). A Semi-Streaming approximation algorithm for MWM}
	\label{alg: MWM semistream}
	$S \gets $ empty stack\\
	$\phi\gets \vec{0}$\\
	$\forall v\in V: E(v) \gets $ empty queue\\	
	\ForEach{$e=(u,u')\in E$%
		\label{lineb: foreach begin}}
	{
		
		\lIf{$w[e] \leq \alpha(\phi(u) + \phi(u'))$}
		{		
			continue
		}
		$S.\text{push}(e)$\\
		$w'[e]\gets w[e]-(\phi(u) + \phi(u'))$ \\
		\ForEach{$v \in e$}
		{
			$E(v).\text{enqueue}(e)$ \\
			$\phi(v) \gets \phi(v) + w'[e]$ \\
			
			\If{$ (\alpha-1)\alpha^{\size{E(v)}-2} >  2\alpha\gamma$}
			{\label{lineb: size ev cond}
				$e'\gets E(v).\text{dequeue}()$\\
				remove $e'$ from $S$
				\label{lineb: foreach end}\\
			}
		}		
	}	
	$M\gets \emptyset$ \\
	\While{$S \neq \emptyset$}
	{
		$e \gets S.\text{pop}()$\\
		\lIf{$M\cap N(e) = \emptyset$}
		{
			$M \gets M \cup \set{e}$
		}
	}
	return $M$	
\end{algorithm}

\begin{lemma}
\label{lem: alg-space}
During the execution of \mwmsemi{}, $|E(v)| =
O\left(\frac{\log n + \log (1/\epsilon)}{\epsilon}\right)$
for each $v\in V$.
\end{lemma}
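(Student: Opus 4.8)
The plan is to track how $\size{E(v)}$ evolves during the first loop of \mwmsemi{} and to show it never grows past a fixed threshold $s^\ast$, which we then estimate from the definitions of $\alpha$ and $\gamma$.

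First I would note that $E(v)$ is modified only inside the inner loop over $v\in e$, and there in a rigid pattern: each time a heavy edge incident to $v$ is processed, $\size{E(v)}$ is incremented by the enqueue, and is then decremented by the dequeue (Line~\ref{lineb: size ev cond}) exactly when the new length $s$ satisfies $(\alpha-1)\alpha^{s-2}>2\alpha\gamma$. Since $\alpha>1$, the quantity $(\alpha-1)\alpha^{s-2}$ is strictly increasing in $s$, so this is a threshold test; let $s^\ast$ be the least integer with $(\alpha-1)\alpha^{s^\ast-2}>2\alpha\gamma$. A straightforward induction over the heavy edges incident to $v$ then shows $\size{E(v)}\le s^\ast$ at every point of the execution: a length of $s^\ast-1$ or less never triggers a dequeue (by minimality of $s^\ast$), while an enqueue that raises the length to $s^\ast$ immediately triggers a dequeue (since $(\alpha-1)\alpha^{s^\ast-2}>2\alpha\gamma$), so the length can never reach $s^\ast+1$. (The two endpoints of an edge are processed in separate iterations of the inner loop, so their queues evolve independently and there is nothing further to check.)

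It then remains to bound $s^\ast$. By minimality, the integer $s^\ast-1$ fails the test, i.e.\ $(\alpha-1)\alpha^{s^\ast-3}\le 2\alpha\gamma$, hence $\alpha^{s^\ast-3}\le 2\alpha\gamma/(\alpha-1)$ and
\[
  s^\ast\;\le\;3+\frac{\ln\!\bigl(2\alpha\gamma/(\alpha-1)\bigr)}{\ln\alpha}.
\]
Now I would substitute $\gamma=n^2/\ln\alpha$ and $\alpha=\sqrt{1+\epsilon/2}$. We may assume $\epsilon\le 1$ (for larger constant $\epsilon$ one has $\ln\alpha=\Omega(1)$ and $s^\ast=O(\log n)$ directly). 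Elementary estimates then give $\ln\alpha=\tfrac12\ln(1+\epsilon/2)=\Theta(\epsilon)$, $\alpha-1=\dfrac{\epsilon/2}{\sqrt{1+\epsilon/2}+1}=\Theta(\epsilon)$, and $\alpha=\Theta(1)$, so
\[
  \frac{2\alpha\gamma}{\alpha-1}=\frac{2\alpha n^2}{(\alpha-1)\ln\alpha}=\Theta\!\left(\frac{n^2}{\epsilon^2}\right).
\]
Taking logarithms gives $\ln\!\bigl(2\alpha\gamma/(\alpha-1)\bigr)=O(\log n+\log(1/\epsilon))$, and dividing by $\ln\alpha=\Theta(\epsilon)$ yields $s^\ast=O\!\bigl((\log n+\log(1/\epsilon))/\epsilon\bigr)$, which together with the invariant proves the lemma.

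The conceptual heart of the argument is the threshold invariant $\size{E(v)}\le s^\ast$; I expect the only delicate points to be stating the enqueue/dequeue induction precisely (made easy by the independence of the two endpoints' queues) and carrying out the routine small-$\epsilon$ asymptotics for $\ln\alpha$ and $\alpha-1$.
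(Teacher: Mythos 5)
Your proposal is correct and follows essentially the same route as the paper: both establish the invariant that $(\alpha-1)\alpha^{\size{E(v)}-2}\leq 2\alpha\gamma$ is maintained (you phrase it as a threshold $s^\ast$ with an enqueue/dequeue induction, the paper states the invariant directly and notes it is preserved because $E(v)$ grows by at most one per iteration and is trimmed when the test fires), and both then substitute $\gamma=n^2/\ln\alpha$, $\alpha=\sqrt{1+\epsilon/2}$ and use $\ln\alpha=\Theta(\epsilon)$, $\alpha-1=\Theta(\epsilon)$ to obtain the $O\bigl((\log n+\log(1/\epsilon))/\epsilon\bigr)$ bound. The only cosmetic difference is that the paper tracks explicit constants (e.g.\ $\alpha-1>\epsilon/6$ for $0<\epsilon<6$) where you use $\Theta$-notation with a WLOG reduction to $\epsilon\leq 1$; the argument is the same.
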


\begin{proof}
After each iteration of the loop in
Lines~\ref{lineb: foreach begin}--\ref{lineb: foreach end},
we have $(\alpha-1)\alpha^{\size{E(v)}-2} \leq  2\alpha\gamma$ for each $v\in V$:
this is true at the beginning;
$E(v)$ can grow only by 1 at each iteration;
and whenever the inequality does not hold,
an edge is removed from $E(v)$.

From the above inequality, 
$\alpha = \sqrt{1+\epsilon/2}$, and $\gamma = n^2 / \ln (\alpha)$, 
we derive an asymptotic bound on $\size{E(v)}$.
\begin{align*}
\size{E(v)}
&\leq \frac{\log \frac{2\alpha \gamma}{\alpha-1}}{\log\alpha} +2\\ 
&\leq \frac{\log \frac{2n^2 \alpha^2}{(\alpha-1)^2}}{\log\alpha} +2
\tag{$\ln\alpha>\frac{\alpha-1}{\alpha}$ for $\alpha>1$} \\
&=    \frac{\log (2n^2) +2\log\alpha-2\log(\alpha-1)}{\log\alpha} +2\\
&\leq 6\frac{\log (2n^2) -2\log(\epsilon/6)}{\epsilon} +4
\tag{$\alpha-1>\epsilon/6$ and $\log\alpha>\epsilon/6$ for $0<\epsilon<6$} \\
&=    O\left(\frac{\log n + \log (1/\epsilon)}{\epsilon}\right)
\end{align*}
as desired.
\end{proof}

From Lemma~\ref{lem: alg-space}
we conclude that, for a constant $\epsilon$,
\mwmsemi{} maintains at most $O(n\log n)$ edges,
each represented by $O(\log n)$ bits,
giving a total space of $O(n\log^2 n)$ bits. 
Our algorithm requires $O(1)$ time to process a new edge arriving from the stream, 
and $O(n\log n)$ time for the post-processing step. 
A similar analysis, without assuming $\epsilon$ is constant, 
implies the main theorem of this paper.

\begin{theorem-repeat}{thm: the main theorem}
There exists an algorithm in the semi-streaming model computing a $(2+\epsilon)$-approximation for MWM, using $O(\epsilon^{-1}n\log n \cdot (\log n + \log (1/\epsilon)))$ bits and having an $O(1)$ processing time.
\end{theorem-repeat}

In our analysis we assume that the edge weights can be represented using $O(\log n)$ bits,
which is the case, e.g., for integer edge weights bounded by a polynomial in $n$. 
If this is not the case, and the weights are integers bounded by some $W$, our algorithm requires $O(n(\log^2 n + \log W))$ bits, as it keeps a sum of weights for every node,
and does not keep edge weights at all. 

\paragraph{Acknowledgments}
We thank Keren Censor-Hillel and Seri Khoury for helpful discussions,
Seffi Naor for useful comments on the presentation,
and the anonymous referees of SODA 2017 and ACM TALG for their comments.

\let\OLDthebibliography\thebibliography
\renewcommand\thebibliography[1]{
	\OLDthebibliography{#1}
	\setlength{\parskip}{0pt}
	\setlength{\itemsep}{2pt plus 0.3ex}
}
\bibliographystyle{alpha}
\bibliography{MM}

\end{document}